\theoremstyle{definition}
\newtheorem{df}{Definition}[section]
\theoremstyle{plain}
\newtheorem{tw}{Theorem}[section]
\newtheorem{st}{Proposition}[section]
\begin{document}
\numberwithin{equation}{section}
\title{\bf A geometric description of Maxwell field\\ in a Kerr spacetime}
\author{Jacek Jezierski\thanks{E--mail: \texttt{Jacek.Jezierski@fuw.edu.pl}}\; and Tomasz Smołka\thanks{E--mail: \texttt{tksmolka@gmail.com}} \\
	Department of Mathematical Methods in
	Physics, \\ University of Warsaw,
	ul. Pasteura 5, 02-093 Warszawa, Poland}
\maketitle
\begin{abstract}
We consider the Maxwell field in the exterior of a Kerr black hole. For this system, we propose a geometric construction of generalized Klein--Gordon equation called Fackerell--Ipser equation. Our model is based on conformal Yano--Killing tensor (CYK tensor). We present non-standard properties of CYK  tensors in the Kerr spacetime which are useful in electrodynamics.
\end{abstract}
\section{Introduction}
Maxwell's equations in the Kerr spacetime are important for research in relativistic astrophysics. The electrodynamics on Kerr background was examined in 1970s using Newman--Penrose formalism (see %Chandrasekhar
\cite{Chandra},\cite{Chandra.book}). Fackerell and Ipser reduced Maxwell's equations to a single second order partial differential equation for a complex scalar (see (\ref{FQ_Weyl})). We call it F--I equation. The solution of F-I equation is closely related to the Newman--Penrose electromagnetic scalar\footnote{See equations (\ref{NP}) and (\ref{FI_NP}) for further results.}. Nowadays, F--I equation is investigated in the context of the uniform energy bound and asymptotic behaviour of solutions \cite{And_spinor}.

In this paper we propose a geometric construction which enables one to obtain  F--I equation using conformal Yano--Killing tensor (CYK tensor). CYK tensors were often investigated as a tool to study symmetries and construct conserved quantities (\cite{penrose,Kress,jez-luk}). In electrodynamics there are two kinds of  conserved quantities which are defined with the help of CYK tensors. %The first kind corresponds to  is linear and the second one is quadratic with respect to the Maxwell field. The first one corresponds to electric and magnetic charge and the quadratic quantity expresses the energy, linear momentum or angular momentum of the Maxwell field.
The first one corresponds to electric or magnetic charge and is linear with respect to the Maxwell field. The second kind is quadratic and expresses energy, linear momentum or angular momentum of the electromagnetic tensor.
In our approach we investigate non-standard properties of CYK tensor in Kerr spacetime which %enable one to reduce
facilitate reduction of Maxwell's equations to a single generalized Klein--Gordon equation for a complex scalar.

The paper is organized as follows: In Section 2 we review the general properties of CYK tensor and  present a construction of a second order wave equation (\ref{FQ_Riemann})  for special spacetimes equipped with CYK tensor.
The next section is focused on properties of Kerr spacetime. In particular, we demonstrate how equation (\ref{FQ_Riemann}) can be reduced to %Fackerell--Ipser
F--I equation (\ref{FQ_Weyl}). During our derivation of F--I equation, additional properties of CYK tensors in Kerr spacetime are presented. To clarify the exposition, some of the technical results and proofs have been shifted to the appendix. The last section contains miscellaneous results: generalization of Fackerell--Ipser equation for Kerr--de Sitter spacetime and discussion of distorted Coulomb solution arising from
electromagnetic field in the Kerr--Newman--de Sitter spacetime.

\section{General properties of Maxwell field and CYK tensor}
Let $M$ be a four-dimensional manifold equipped with pseudo-Riemannian metric $g_{\mu \nu}$. The covariant derivative associated with the Levi-Civita connection will be denoted by $\nabla$ or just by ";". We will denote by $T_{...(\mu\nu)...}$ the symmetric part and by $T_{...[\mu\nu]...}$ the antisymmetric part of tensor $T_{...\mu\nu...}$ with respect to indices $\mu$ and $\nu$ (analogous symbols will be used for more indices).
%\cmti{Letters $t,r, \theta, \varphi$ denotes Boyer--Lindquist coordinates for Kerr metric
%and we never use theirs like an index. }
\subsection{General properties of CYK tensors in four dimensions}
CYK tensor can be defined in general case -- for $p$-forms on $n$-dimensional manifold.
 However, here we restrict ourselves to an oriented manifold of dimension $n=4$ and by CYK tensor we mean a (two-index) antisymmetric tensor (two-form).\\ Let $Q_{\mu\nu}$ be an antisymmetric tensor field (two-form) on $M$ and by $\mathcal{Q}_{\lambda\mu\nu}$ let us denote  a (three-index) tensor defined as follows:
 \begin{equation}
 \mathcal{Q}_{\lambda \mu\nu}:=Q_{\lambda \mu;\nu}+Q_{\nu \mu;\lambda}-\frac{2}{3}(g_{\nu \lambda} {Q^{\rho}}_{\mu; \rho}+g_{\mu ( \lambda } {{Q_{\nu)}}^{\rho}}_{; \rho}) \label{cal_Q}
 \end{equation}
The object $\mathcal{Q}$ has the following algebraic properties: $\mathcal{Q}_{\lambda \mu\nu}g^{\lambda \nu}=0=\mathcal{Q}_{\lambda \mu\nu} g^{\lambda \mu}$, $\mathcal{Q}_{\lambda \mu\nu}=\mathcal{Q}_{\nu \mu\lambda}$, i.e. it is traceless and partially symmetric.
\begin{df}
	An antisymmetric tensor $Q_{\mu\nu}$ is a conformal Yano--Killing tensor (or simply CYK tensor) for the metric $g$ iff $\mathcal{Q}_{\lambda \mu\nu}(Q,g)=0$.
\end{df}
In other words, $Q_{\mu\nu}$ is a conformal Yano--Killing tensor if it fulfils the following equation:
\begin{equation}
Q_{\lambda \mu;\nu}+Q_{\nu \mu;\lambda}=\frac{2}{3}(g_{\nu \lambda} {Q^{\rho}}_{\mu; \rho}+g_{\mu ( \lambda } {{Q_{\nu)}}^{\rho}}_{; \rho}) \label{df_Q}
\end{equation}
(first proposed by Tachibana and Kashiwada \cite{Tachi-Kashi}).

\paragraph{Hodge duality} In the space of differential forms on an oriented manifold one can define a mapping called the Hodge duality (Hodge star). It assigns to every $p$-form an $(n-p)$-form (where $n$ is the dimension of the manifold). We consider the case of $n=4$ and $p=2$. The Hodge star then becomes a mapping which assigns to a two-form $\omega$ a two-form $\ast \omega$. %Using coordinates
We can express this mapping in the following way:
\begin{equation}
 \ast \omega_{\alpha \beta}=\frac{1}{2}{\varepsilon_{\alpha \beta}}^{\mu \nu}\omega_{\mu \nu} \label{Hodge}
\end{equation}
where $\varepsilon_{\alpha \beta \gamma \delta}$ is the antisymmetric Levi-Civita tensor\footnote{It can be defined by the formula $\varepsilon_{\alpha \beta \gamma \delta}=\sqrt{-\det g_{\mu\nu}}\epsilon_{\alpha \beta \gamma \delta}$, where
	\begin{equation}
	\epsilon_{\alpha \beta \gamma \delta}=\left \{\begin{array}{l}
	+1 \quad \mbox{if $\alpha \beta \gamma \delta$ is an even permutation of 0, 1, 2, 3}\\
	-1 \quad \mbox{if $\alpha \beta \gamma \delta$ is an odd permutation of 0, 1, 2, 3}\\
	\hspace{0.25 cm}0 \hspace{0.2 cm} \mbox{ in any other case}
	\end{array} \right. \nonumber
	\end{equation}} determining orientation of the manifold ($\frac{1}{4!} \varepsilon_{\alpha \beta \gamma \delta} \mathrm{d} x^{\alpha} \wedge  \mathrm{d} x^{\beta} \wedge \mathrm{d} x^{\gamma} \wedge \mathrm{d} x^{\delta}$ is the volume form of the manifold $M$). For the Lorentzian metric we have $\ast \ast \omega=- \omega$. Due to CYK tensor being a two-form, it is reasonable to ask what are the properties of its dual. Let $Q$  be a CYK tensor and $\ast Q$ its dual. Moreover, let us introduce the following covector $\chi_{\mu}:=\nabla^{\nu} \ast Q_{\nu \mu}$. It was proved in \cite{jez-luk}, that
\begin{equation}
\ast Q_{\lambda \mu;\nu}+\ast Q_{\nu \mu;\lambda}=\frac{2}{3}\left(g_{\nu \lambda} \chi_{\mu}-g_{\mu ( \lambda }\chi_{\nu)}\right) \label{ast_Q}
\end{equation}
It is not hard to recognize that this is Eq. (\ref{df_Q}) for the tensor $\ast Q$. It proves the following theorem:
\begin{tw}
	Let $g_{\mu\nu}$ be a metric tensor on  a four-dimensional differential manifold $M$. An antisymmetric tensor $Q_{\mu \nu}$ is a CYK tensor of the metric $g_{\mu\nu}$ if and only if its dual $\ast Q_{\mu \nu}$ is also a CYK tensor of this metric. \label{dualCYK}	
\end{tw}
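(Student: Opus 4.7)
The plan is to reduce the equivalence to the identity (\ref{ast_Q}) already established in \cite{jez-luk} and quoted just above the statement. For the forward implication, assuming $Q$ satisfies the CYK equation $\mathcal{Q}(Q,g)=0$, equation (\ref{ast_Q}) holds by the cited result; it then suffices to recognize (\ref{ast_Q}) as the defining equation (\ref{df_Q}) rewritten with $\ast Q$ in place of $Q$. This is a short piece of index bookkeeping: the covector $\chi_\mu=\nabla^\rho \ast Q_{\rho\mu}$ coincides with $(\ast Q)^\rho{}_{\mu;\rho}$, i.e.\ with the divergence appearing on the right-hand side of (\ref{df_Q}), and antisymmetry of $\ast Q$ turns $(\ast Q)_\nu{}^\rho{}_{;\rho}$ into $-\chi_\nu$, which produces exactly the minus sign that distinguishes the two right-hand sides. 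Once this cosmetic identification is made, $\mathcal{Q}(\ast Q, g)=0$ is immediate.

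For the converse, I would invoke the Hodge-star relation $\ast\ast=\pm\mathrm{id}$ on two-forms in four dimensions (the minus sign being the Lorentzian case recalled in the excerpt). Applying the already proven forward direction to $\ast Q$ in place of $Q$, its dual $\ast(\ast Q)=\pm Q$ is a CYK tensor. Since the CYK equation (\ref{df_Q}) is linear in its argument, the sign is irrelevant, and one concludes that $Q$ itself is a CYK tensor.

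The only real obstacle in this line of argument is verifying that (\ref{ast_Q}) is literally (\ref{df_Q}) evaluated on $\ast Q$; the substantive computation that produces (\ref{ast_Q}) from the CYK equation for $Q$---typically involving the divergence of the Hodge dual and the algebraic properties of $\varepsilon_{\alpha\beta\gamma\delta}$---has been delegated to \cite{jez-luk}, so no Bianchi-type or commutator manipulation is required inside the proof of this theorem.
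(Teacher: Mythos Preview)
Your proposal is correct and follows exactly the paper's approach: the paper's entire proof is the sentence preceding the theorem, namely that (\ref{ast_Q}) (quoted from \cite{jez-luk}) ``is Eq.~(\ref{df_Q}) for the tensor $\ast Q$,'' and you have simply made explicit both the index bookkeeping behind that recognition and the converse via $\ast\ast=-\mathrm{id}$, which the paper leaves tacit.
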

The above theorem implies that for every four-dimensional manifold, solutions of Eq. (\ref{df_Q}) exist in pairs -- to each solution we can assign the dual solution (in the Hodge duality sense). For any tensor field $T_{\mu\nu}$ on $M$ holds:
\begin{equation}
T_{\lambda \kappa; \nu \mu}-T_{\lambda \kappa; \mu \nu}=T_{\sigma \kappa} {R^{\sigma}}_{\lambda \nu \mu}+T_{\lambda \sigma} {R^{\sigma}}_{\kappa \nu \mu} \label{comT}
\end{equation}
The above identity follows directly from the Riemann tensor definition.
\paragraph{Wave equation satisfied by CYK tensor} Let $M$ be a four-dimensional manifold equipped with a metric tensor $g_{\mu\nu}$. If $Q_{\lambda \sigma}$ and ${R^{\sigma}}_{\lambda \nu \mu}$ are a CYK tensor and the Riemann tensor respectively then the following equality holds:
\begin{equation}
\nabla_{\mu} \nabla^{\mu} Q_{\lambda \kappa}={R^{\sigma}}_{\kappa \lambda \nu} {Q_{\sigma}}^{\nu}-R_{\sigma[\kappa}{Q_{\lambda]}}^{\sigma} \label{dalamb_Q}
\end{equation} \label{ddQ}
The above equality was proved in %appendix to
\cite{jez-luk}. For clarity of the exposition, we present the proof of the above equation in the appendix \ref{shifted_proof}.

\subsection{The relation between CYK tensor and Maxwell field}
We use vacuum Maxwell equations in terms of Maxwell field. Maxwell field $F_{\mu \nu}$ is a two-form (antisymmetric tensor) field. The vacuum Maxwell equations in terms of Maxwell field take the form
\begin{equation}
\left\{\begin{array}{r}
\mathrm{d} F=0\\[0.5ex]
\mathrm{d} \ast \!\!F=0
\end{array}   \right. \iff
\left\{\begin{array}{r}
F_{[\mu\nu;\lambda]}=0 \\[0.5ex]	{F_{\mu\nu}}^{;\mu}=0			
\end{array}   \right. \label{eqMaxwell}
\end{equation}
 where $\ast$ denotes Hodge duality (\ref{Hodge}). We consider real Maxwell fields. Taking a divergence of the first Maxwell equation (\ref{eqMaxwell}) and combining with (\ref{comT}), we can easily transform the d'Alembertian of Maxwell field:
 \begin{eqnarray}
	\Box F_{\mu \nu}&=&{F_{\mu \nu ;\lambda}}^{;\lambda} \nonumber \\
	&=&{F_{\lambda \nu ;\mu}}^{;\lambda}- {F_{\lambda \mu ;\nu}}^{;\lambda} \nonumber \\
	&=& \underbrace{{{F_{\lambda \nu}}^{;\lambda}}_{\mu}}_{0}-g^{\rho \lambda}({R^{\alpha}}_{\lambda \rho \mu} F_{\alpha \nu}+{R^{\alpha}}_{\nu \rho \mu} F_{\lambda \alpha}) \nonumber \\
	&& - \underbrace{{{F_{\lambda \mu}}^{;\lambda}}_{\nu}}_{0}+g^{\rho \lambda}({R^{\alpha}}_{\lambda \rho \nu} F_{\alpha \mu}+{R^{\alpha}}_{\mu \rho \nu} F_{\lambda \alpha}) \nonumber \\
	&=& g^{\rho \lambda} \left[ ({R^{\alpha}}_{\lambda \rho \nu} F_{\alpha \mu}-{R^{\alpha}}_{\lambda \rho \mu} F_{\alpha \nu})+({R^{\alpha}}_{\mu \rho \nu} F_{\lambda \alpha}-{R^{\alpha}}_{\nu \rho \mu} F_{\lambda \alpha}) \right] \nonumber \\
	&=& ({R^{\lambda}}_{\alpha\lambda \nu} {F_{\mu}}^{\alpha}-{R^{\lambda}}_{\alpha\lambda \mu} {F_{\nu}}^{\alpha})+({R^{\alpha}}_{\nu \mu \lambda} {F^{\lambda}}_{\alpha}-{R^{\alpha}}_{\mu \nu \lambda} {F^{\lambda}}_{\alpha}) \nonumber
 \end{eqnarray}
 and we obtain
 \begin{equation}
\Box F_{\mu \nu}=-2 {R^{\lambda}}_{\alpha\lambda[\mu} {F_{\nu]}}^{\alpha} -2 {R^{\alpha}}_{\left[\mu \nu \right] \lambda} {F^{\lambda}}_{\alpha} \label{ddF}
 \end{equation}
% Now we present a additional properties of CYK tensor.
Combining Maxwell equations (\ref{eqMaxwell}) with CYK equation (\ref{df_Q}), we can check that the term $\nabla^{\lambda}F_{\mu \nu}\nabla_{\lambda} Q^{\mu \nu}$ is vanishing. More precisely,
\begin{eqnarray}
0&=&\overbrace{F_{[\mu \nu;\lambda]}}^{0}Q^{\mu\nu; \lambda} \nonumber \\
&=&F_{\mu \nu; \lambda}Q^{\mu \nu;\lambda}+ 2 F_{\lambda \mu;\nu}Q^{\mu \nu;\lambda} \nonumber \\
&=&3 F_{\mu \nu;\lambda} Q^{\mu \nu;\lambda}+2 F_{\lambda \mu;\nu}(Q^{\mu \nu;\lambda}+ Q^{\mu \lambda;\nu}) \nonumber \\
&=&2 F_{\lambda \mu;\nu}\underbrace{ \left[Q^{ \mu \nu;\lambda}+Q^{\mu \lambda; \nu}+\frac{2}{3}(g^{\nu \lambda} {Q^{\rho\mu}}_{; \rho}+g^{\mu ( \lambda } {Q^{\nu)\rho}}_{; \rho})\right]}_{ \mathcal{Q}^{\mu\nu \lambda}=0} \nonumber \\
&&+3F_{\mu \nu;\lambda}Q^{\mu \nu; \lambda} - 2 \underbrace{{F^{\lambda}}_{\mu;\lambda}}_{0}{Q^{\rho \mu}}_{; \rho} \nonumber \\
&=&3F_{\mu \nu;\lambda}Q^{\mu \nu; \lambda} \label{dFdQ}
\end{eqnarray}
 Finally,	the d'Alembertian of Maxwell--CYK contraction takes the following form:
 \begin{eqnarray}
\Box(F_{\mu \nu}Q^{\mu \nu})&=&Q^{\mu \nu}\Box F_{\mu \nu}+F^{\mu \nu}\Box Q_{\mu \nu}+2 F_{\mu \nu;\lambda}Q^{\mu \nu; \lambda} \nonumber \\
&=&Q^{\mu \nu}\Box F_{\mu \nu}+F^{\mu \nu}\Box Q_{\mu \nu}	 \label{dalamb_FQ}
 \end{eqnarray}
 The last equality is implied by (\ref{dFdQ}). The above considerations lead to the following %theorem
 \begin{tw} Let $F_{\mu\nu}$, $Q_{\mu\nu}$ and ${R^{\sigma}}_{\lambda \nu \mu}$ be respectively a Maxwell field, a CYK tensor and the Riemann tensor corresponding to the metric $g_{\mu\nu}$. Then
 	\begin{equation}
	 \Box(F_{\mu \nu} Q^{\mu \nu})+\frac{1}{2}F^{\sigma \lambda}R_{\sigma \lambda \mu \nu}Q^{\mu \nu}+Q^{\mu \nu}R_{\sigma \mu}F_{\nu}{^{\sigma}}=0\label{FQ_Riemann}
 	\end{equation} \label{General_thm}
 \end{tw}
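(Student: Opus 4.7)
The plan is to start from the identity (\ref{dalamb_FQ}), which already implements the vanishing of $F_{\mu\nu;\lambda}Q^{\mu\nu;\lambda}$, so that the task reduces to computing
\[
Q^{\mu\nu}\Box F_{\mu\nu}+F^{\mu\nu}\Box Q_{\mu\nu}
\]
and showing that it equals $-\tfrac12 F^{\sigma\lambda}R_{\sigma\lambda\mu\nu}Q^{\mu\nu}-Q^{\mu\nu}R_{\sigma\mu}F_\nu{}^\sigma$. For the first term I would substitute (\ref{ddF}); for the second I would substitute (\ref{dalamb_Q}) (with indices relabeled as $\lambda,\kappa\to\mu,\nu$). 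Each substitution produces (i) a Ricci-type piece with one free curvature index contracted with $F$ or $Q$, and (ii) a Riemann-type piece that is a full three-tensor contraction $F\!\cdot\! R\!\cdot\!Q$.

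Next I would simplify the two Riemann contributions separately. For the piece from $\Box F_{\mu\nu}$, the term $-2R^{\alpha}{}_{[\mu\nu]\lambda}F^\lambda{}_\alpha$ is rewritten using the first Bianchi identity $R^{\alpha}{}_{\mu\nu\lambda}-R^{\alpha}{}_{\nu\mu\lambda}=R^{\alpha}{}_{\lambda\nu\mu}$; then pair-exchange $R_{abcd}=R_{cdab}$ together with the antisymmetries of $F$ and $Q$ converts the contraction with $Q^{\mu\nu}$ into exactly $-F^{\sigma\lambda}R_{\sigma\lambda\mu\nu}Q^{\mu\nu}$. For the piece from $\Box Q_{\mu\nu}$, namely $F^{\mu\nu}R^{\sigma}{}_{\nu\mu\rho}Q_\sigma{}^\rho$, I would again apply the first Bianchi identity on the last three indices of the Riemann tensor and use the antisymmetries of $F$ and $Q$ to fold it into $+\tfrac12 F^{\sigma\lambda}R_{\sigma\lambda\mu\nu}Q^{\mu\nu}$. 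Adding the two contributions gives the net Riemann coefficient $-\tfrac12 F^{\sigma\lambda}R_{\sigma\lambda\mu\nu}Q^{\mu\nu}$, which is already of the required form.

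For the Ricci pieces, the one coming from $\Box F_{\mu\nu}$ gives $-2Q^{\mu\nu}R_{\sigma\mu}F_\nu{}^\sigma$ after resolving the antisymmetrization $R_{\alpha[\mu}F_{\nu]}{}^\alpha$ against the antisymmetric $Q^{\mu\nu}$, while the one from $\Box Q_{\mu\nu}$ gives $+F^{\mu\nu}R_{\sigma\mu}Q_\nu{}^\sigma$. The key small observation is that
\[
F^{\mu\nu}R_{\sigma\mu}Q_\nu{}^\sigma=Q^{\mu\nu}R_{\sigma\mu}F_\nu{}^\sigma,
\]
because treating $F,Q,R$ as $(1,1)$-tensor matrices this identity is $\operatorname{tr}(RFQ)=\operatorname{tr}(RQF)$, which follows from $\operatorname{tr}(M)=\operatorname{tr}(M^T)$ combined with $R^T=R$ (Ricci symmetry) and $F^T=-F$, $Q^T=-Q$. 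Using this, the three Ricci pieces combine to $-Q^{\mu\nu}R_{\sigma\mu}F_\nu{}^\sigma$, and moving everything to the left-hand side reproduces (\ref{FQ_Riemann}).

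I expect the main obstacle to be keeping track of the many sign conventions during the Riemann-tensor reduction: the two Riemann contributions individually look unrelated, and only after invoking the first Bianchi identity and the pair-exchange symmetry do they combine into a multiple of the single invariant $F^{\sigma\lambda}R_{\sigma\lambda\mu\nu}Q^{\mu\nu}$. Everything else is bookkeeping driven by the symmetries of $F$, $Q$, and the Ricci tensor.
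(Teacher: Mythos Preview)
Your proposal is correct and follows essentially the same route as the paper's proof: both start from (\ref{dalamb_FQ}), insert (\ref{ddF}) and (\ref{dalamb_Q}), and then reduce the resulting curvature contractions with the first Bianchi identity and the symmetries of $F$, $Q$, and the Ricci tensor. The only cosmetic difference is that the paper first groups the two Riemann contributions into the single expression $R_{\sigma\mu\nu\lambda}(2Q^{\mu\nu}F^{\sigma\lambda}-F^{\mu\nu}Q^{\sigma\lambda})$ before invoking Bianchi, whereas you simplify each piece separately; your explicit trace argument $\operatorname{tr}(RFQ)=\operatorname{tr}(RQF)$ is precisely what the paper uses implicitly when collapsing the Ricci terms.
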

 \begin{proof}
 	 Making use of equations (\ref{dalamb_Q}), (\ref{ddF}) and (\ref{dalamb_FQ}), we can transform Maxwell--CYK contraction in the following way:
 	 \begin{eqnarray}
 	 \Box(F_{\mu \nu}Q^{\mu \nu})&=&-2 Q^{\mu \nu}\left({R^{\lambda}}_{\sigma\lambda\mu} {F_{\nu}}^{\sigma} +{R^{\sigma}}_{\mu \nu \lambda} {F^{\lambda}}_{\sigma}\right)+F^{\mu \nu} \left({R^{\sigma}}_{\nu \mu \lambda} {Q_{\sigma}}^{\lambda}+R_{\sigma\mu}{Q_{\nu}}^{\sigma} \right) \nonumber\\
 	 &=&-Q^{\mu \nu} R_{\sigma \mu}{F_{\nu}}^{\sigma}+R_{\sigma \mu \nu \lambda}(2 Q^{\mu\nu} F^{\sigma \lambda}-F^{\mu\nu} Q^{\sigma \lambda})\nonumber \\
 	 &=&-\frac{1}{2}F^{\sigma \lambda}R_{\sigma \lambda \mu \nu}Q^{\mu \nu}-Q^{\mu \nu}R_{\sigma \mu} F_{\nu}{^{\sigma}} \nonumber
 	 \end{eqnarray}
 		(the last equality uses Bianchi identity $R^{\sigma}{_{[\mu \nu \lambda]}}=0$).
 \end{proof}
 		It is convenient to split the Riemann tensor into Weyl tensor $C_{\sigma \lambda \mu \nu}$, Ricci tensor $R_{\mu \nu}$ and curvature scalar $R$, and rewrite Eq. (\ref{FQ_Riemann}) in the equivalent form
\begin{equation}
\left(\Box-\frac{1}{6}R \right)(F_{\mu \nu}Q^{\mu \nu})+\frac{1}{2}F^{\sigma \lambda}C_{\sigma \lambda \mu \nu}Q^{\mu \nu}=0 \label{FQ_Weyl}
\end{equation}
 We may note here that the above equation (\ref{FQ_Weyl}) is crucial for our further investigation. For flat spacetime Eq. (\ref{FQ_Weyl}) reduces to the wave equation $\displaystyle \Box \phi=0$, where $\phi:=F_{\mu \nu}Q^{\mu \nu}$ is a scalar function.
  We will show in the next section that for the Kerr case we can reduce Eq. (\ref{FQ_Weyl}) to Fackerell--Ipser equation. %It is done in the next section.
  Let us remind: a Weyl tensor ${C^{\sigma}}_{\lambda \mu \nu}$ and a Maxwell field $F_{\mu\nu}$ remain unchanged under a conformal rescaling $g_{\mu\nu} \to \Omega^{2} g_{\mu\nu}$ for any positive function $\Omega$ on $M$. Moreover, tensor $\mathcal{Q}_{\lambda \mu\nu}$  (see (\ref{cal_Q})) transforms under the conformal rescaling in the following way:
  \begin{equation}
   \mathcal{Q}_{\lambda \mu\nu}(Q,g)=\Omega^{-3} \mathcal{Q}_{\lambda \mu\nu}(\Omega^{3}Q,\Omega^{2}g)  \nonumber
  \end{equation}
which implies %proposition
\begin{st}
	If $Q_{\mu \nu}$ is a CYK tensor for the metric $g_{\mu \nu}$, then $\Omega^{3} Q_{\mu\nu}$ is a CYK tensor for the conformally rescaled metric  $\Omega^{2} g_{\mu \nu}$.
\end{st}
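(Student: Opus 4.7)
The plan is to observe that the proposition follows immediately from the conformal-covariance identity
\[
\mathcal{Q}_{\lambda \mu\nu}(Q,g)=\Omega^{-3}\,\mathcal{Q}_{\lambda \mu\nu}(\Omega^{3}Q,\Omega^{2}g)
\]
displayed just above its statement: if $Q$ is a CYK tensor of $g$ then by definition $\mathcal{Q}(Q,g)=0$, so the right-hand side vanishes, which by definition is exactly the assertion that $\Omega^3 Q$ is a CYK tensor of $\Omega^2 g$. The only substantive work is therefore to verify this identity.

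To do so, I would write $\tilde g_{\mu\nu}:=\Omega^{2}g_{\mu\nu}$, $\tilde Q_{\mu\nu}:=\Omega^{3}Q_{\mu\nu}$, $\Upsilon_{\mu}:=\nabla_\mu\ln\Omega$, and invoke the standard formula for the conformal change of Christoffel symbols,
\[
\tilde\Gamma^{\alpha}_{\beta\gamma}-\Gamma^{\alpha}_{\beta\gamma}=\delta^{\alpha}_{\beta}\Upsilon_{\gamma}+\delta^{\alpha}_{\gamma}\Upsilon_{\beta}-g_{\beta\gamma}\Upsilon^{\alpha}.
\]
A direct computation then expresses $\tilde\nabla_{\nu}\tilde Q_{\lambda\mu}$ as $\Omega^{3}\nabla_{\nu}Q_{\lambda\mu}$ plus an explicit batch of terms linear in $\Upsilon$ and $Q$, and yields an analogous expansion for the divergence pieces appearing in the trace correction (note that $\tilde g^{\rho\sigma}=\Omega^{-2}g^{\rho\sigma}$, so raising an index of $\tilde Q$ with $\tilde g$ carries an extra factor $\Omega^{-1}$ relative to the bare weight $\Omega^{3}$). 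Substituting both into the defining expression (\ref{cal_Q}) for $\mathcal{Q}_{\lambda\mu\nu}(\tilde Q,\tilde g)$ produces $\Omega^{3}\mathcal{Q}_{\lambda\mu\nu}(Q,g)$ plus a remainder polynomial in $\Upsilon$ and $Q$ alone.

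The heart of the proof is then to show that this remainder vanishes identically. One groups the $\Upsilon$-terms coming from the antisymmetric piece $\tilde Q_{\lambda\mu;\nu}+\tilde Q_{\nu\mu;\lambda}$ with those coming from the trace correction $-\tfrac{2}{3}\bigl(\tilde g_{\nu\lambda}\tilde Q^{\rho}{}_{\mu;\rho}+\tilde g_{\mu(\lambda}\tilde Q_{\nu)}{}^{\rho}{}_{;\rho}\bigr)$, and verifies cancellation using the antisymmetry of $Q$, the symmetrization in $(\lambda,\nu)$ built into the trace correction, and the trace-free identities $\mathcal{Q}_{\lambda\mu\nu}g^{\lambda\nu}=\mathcal{Q}_{\lambda\mu\nu}g^{\lambda\mu}=0$ noted just below (\ref{cal_Q}). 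This is precisely where the coefficient $\tfrac{2}{3}$ and the conformal weight $3$ intervene: they are the unique values that force all $\Upsilon$-contributions to cancel. I expect no conceptual obstacle; the main difficulty is the index bookkeeping — carefully tracking which metric is used to raise each index, and respecting the $(\lambda,\nu)$ symmetrization throughout, so that no spurious $\Upsilon$-term is left uncancelled.
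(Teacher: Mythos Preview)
Your proposal is correct and matches the paper's approach exactly: the paper derives the proposition as an immediate consequence of the conformal-covariance identity $\mathcal{Q}_{\lambda\mu\nu}(Q,g)=\Omega^{-3}\mathcal{Q}_{\lambda\mu\nu}(\Omega^{3}Q,\Omega^{2}g)$, which it states (without proof) just before the proposition. Your outline of the verification of that identity via the standard conformal change of Christoffel symbols supplies more detail than the paper does, but the logical route is the same.
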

Moreover, the (upper index) tensor $Q^{\alpha\beta}=g^{\alpha\mu}g^{\beta\nu}Q_{\mu\nu}$ rescales by $\Omega^{-1}$.
In addition to this, let $\phi$ be a scalar function on four-dimensional manifold. If $\phi$ rescales conformally $\tilde{\phi} \to \Omega^{-1} \phi$, then the operator presented below transforms under conformal change of a metric ($\widetilde{g}_{\mu\nu}=\Omega^{2} g_{\mu\nu}$) in the following way:
\begin{equation}
\left( \widetilde{\square}-\frac{1}{6} \widetilde{R} \right)\widetilde{\phi}=\Omega^{-3}\left( \Box-\frac{1}{6} R \right) \phi \label{conf_inv_op}
\end{equation}
where $R$ is a curvature scalar.

The above facts lead to a proposition presented below.
\begin{st}
	The equation (\ref{FQ_Weyl}) remains unchanged under conformal transformation of the metric: ${g} \to
 \widetilde{g}=\Omega^2 g$.
\end{st}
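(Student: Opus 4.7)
The plan is to track the conformal weights of every object appearing in equation (\ref{FQ_Weyl}) and verify that both summands rescale by the same overall power of $\Omega$. The key inputs have already been recorded in the excerpt: $F_{\mu\nu}$ (all lower indices) and ${C^{\sigma}}_{\lambda\mu\nu}$ are conformally invariant; the preceding proposition gives $\widetilde{Q}_{\mu\nu}=\Omega^{3}Q_{\mu\nu}$, so raising both indices with $\widetilde{g}^{\mu\nu}=\Omega^{-2}g^{\mu\nu}$ yields $\widetilde{Q}^{\mu\nu}=\Omega^{-1}Q^{\mu\nu}$; and the conformally covariant scalar operator satisfies (\ref{conf_inv_op}).

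First I would compute how the scalar $\phi:=F_{\mu\nu}Q^{\mu\nu}$ transforms. Since $F_{\mu\nu}$ is unchanged and $Q^{\mu\nu}$ picks up $\Omega^{-1}$, one gets $\widetilde{\phi}=\Omega^{-1}\phi$, which is exactly the weight required to apply (\ref{conf_inv_op}). Hence
\begin{equation}
\left(\widetilde{\Box}-\tfrac{1}{6}\widetilde{R}\right)(\widetilde{F}_{\mu\nu}\widetilde{Q}^{\mu\nu})
=\Omega^{-3}\left(\Box-\tfrac{1}{6}R\right)(F_{\mu\nu}Q^{\mu\nu}). \nonumber
\end{equation}

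Next I would handle the Weyl term. From $\widetilde{C}{}^{\sigma}{}_{\lambda\mu\nu}={C}^{\sigma}{}_{\lambda\mu\nu}$ one lowers the first index with $\widetilde{g}_{\sigma\rho}=\Omega^{2}g_{\sigma\rho}$, obtaining $\widetilde{C}_{\sigma\lambda\mu\nu}=\Omega^{2}C_{\sigma\lambda\mu\nu}$. Combining this with $\widetilde{F}^{\sigma\lambda}=\Omega^{-4}F^{\sigma\lambda}$ and $\widetilde{Q}^{\mu\nu}=\Omega^{-1}Q^{\mu\nu}$ gives
\begin{equation}
\tfrac{1}{2}\widetilde{F}^{\sigma\lambda}\widetilde{C}_{\sigma\lambda\mu\nu}\widetilde{Q}^{\mu\nu}
=\Omega^{-4+2-1}\cdot\tfrac{1}{2}F^{\sigma\lambda}C_{\sigma\lambda\mu\nu}Q^{\mu\nu}
=\Omega^{-3}\cdot\tfrac{1}{2}F^{\sigma\lambda}C_{\sigma\lambda\mu\nu}Q^{\mu\nu}. \nonumber
\end{equation}
Both summands thus acquire a common factor $\Omega^{-3}$, so the equation written for $\widetilde{g}$ is $\Omega^{-3}$ times the equation written for $g$, proving conformal invariance.

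The only nontrivial bookkeeping is keeping straight which index position of the Weyl tensor is conformally invariant; forgetting that only ${C^\sigma}_{\lambda\mu\nu}$ (and not $C_{\sigma\lambda\mu\nu}$) is invariant would make the two powers of $\Omega$ disagree, so that is the step where I would double-check the calculation. Everything else is a direct application of the results already collected immediately before the proposition.
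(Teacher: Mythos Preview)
Your proof is correct and matches the paper's approach exactly: the paper simply collects the conformal weights of $F_{\mu\nu}$, ${C^{\sigma}}_{\lambda\mu\nu}$, $Q^{\mu\nu}$ and the operator $\Box-\tfrac16 R$ and then states that ``the above facts lead to'' the proposition, without writing out the bookkeeping you have made explicit. In particular your careful check that $C_{\sigma\lambda\mu\nu}$ carries an extra $\Omega^{2}$ is precisely the point needed to make both summands scale as $\Omega^{-3}$.
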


\section{Electrodynamics on Kerr background}
In this section we consider Eq. (\ref{FQ_Weyl}) for Kerr black hole metric. There is only one pair of solutions (\ref{cyk_kerr}) of CYK equation (\ref{df_Q}) known in the literature. It turns out that this pair of CYK tensors (\ref{cyk_kerr}) possesses some additional properties (see (\ref{diag_Y})) which enable one to obtain a single scalar electromagnetic wave equation describing the evolution of the Maxwell field.
\subsection{Kerr spacetime}
Locally, the Kerr solution to the vacuum Einstein equations is the metric $g_{\mu\nu}$ which in Boyer--Lindquist
coordinates takes the form
\begin{eqnarray}
\label{kerr.metric}
\nonumber
g_{\mu\nu}\mathrm{d}x^\mu \mathrm{d}x^\nu &=& \rho^2\left(\frac{1}{\Delta} \mathrm{d}r^2+ \mathrm{d}\theta^2\right)
+\frac{\sin^2\theta}{\rho^{2}} \left(a \mathrm{d}t-(r^2+a^2)\mathrm{d}\varphi\right)^2  \\
&\quad& -  \frac{\Delta}{\rho^{2}}\left(\mathrm{d}t-a \sin^2 \theta \, \mathrm{d}\varphi\right)^2 %\;,
\end{eqnarray}
where
\begin{eqnarray}
\label{rho.kerr}
\rho^2 &=& r^2+a^2\cos^2 \theta \vphantom{\frac11} %\;,
\\
\label{delta.kerr}
\Delta  &=& (r^2+a^2) -2 m r %\;,
\end{eqnarray}
with $t \in \mathbb{R}$, $r \in \mathbb{R}$, and $\theta$, $\varphi$ being the standard coordinates parameterizing a
two-dimensional sphere. We will keep away from zeros of $|\rho|$ and $\Delta$, and ignore the coordinate singularities $\sin \theta=0$. This metric describes a rotating object of mass $m$ and angular momentum $J=ma$. The advantage of the above coordinates is that for $r$ much grater than $m$ and $a$ the metric  becomes asymptotically flat, i.e. $g \approx -\mathrm{d} t^2+\mathrm{d} r^2 +r^2 (\mathrm{d} \theta^2+\sin^2 \theta \mathrm{d} \phi^2)$.

\subsection{Properties of CYK tensor for Kerr spacetime \label{section_CYK_Kerr}}
Finding a solution of  the CYK tensor equation (\ref{df_Q}) for Kerr is not an easy task. We are dealing with a quite  complicated overdetermined system of differential equations for components of $Q_{\mu \nu}$. However, there is one solution known in the literature (see \cite{gibons}). Aksteiner and Anderson have shown \cite{AndAks} that %there
 only one pair of solutions exists for type D spacetimes. According to the theorem \ref{dualCYK}, this solution has its dual companion. %in Hodge sense (\ref{Hodge}).
We denote them by $Y:=Q_{\mathrm{Kerr}}$ and $\ast Y:=\ast Q_{\mathrm{Kerr}}$:
\begin{eqnarray}
Y&=&r \sin \theta \mathrm{d} \theta \wedge \left[(r^2 +a^2) \mathrm{d} \varphi - a \mathrm{d} t \right] + a \cos \theta \mathrm{d} r \wedge (\mathrm{d} t -a \sin^2 \theta \mathrm{d} \varphi) \nonumber \\
	\ast Y&=&a \cos \theta \sin \theta \mathrm{d} \theta \wedge \left[(r^2 +a^2) \mathrm{d} \varphi - a \mathrm{d} t \right]+r \mathrm{d} r \wedge \left( a \sin ^{2} \theta \mathrm{d} \varphi - \mathrm{d} t \right)  \label{cyk_kerr}
\end{eqnarray}
{%\color{red}
The CYK solutions for Kerr spacetime (\ref{cyk_kerr}) are given in the explicit coordinate system which is not global. We roughly discuss the global existence of (\ref{cyk_kerr}). There are two points which my be problematic with the analytic extension of (\ref{cyk_kerr}):
	\begin{enumerate}
		\item The solution is not global on a whole sphere $r=\mathrm{const}$ --- It is a well-known problem how the formulae (like $\sin\theta {\mathrm d}\theta \wedge {\mathrm d}\varphi$) can be extended on the whole sphere which is not covered by $(\theta,\varphi)$ coordinates. A suitable change of the coordinates is needed.
		\item Analytical extension through the horizon (together with the corresponding coordinate system) --- We have transformed the CYK tensors (\ref{cyk_kerr}) into advanced Eddington--Finkelstein coordinates which are well-defined on the horizon. They are smooth on the horizon and can be extended analytically through it.
	\end{enumerate}}
The Riemann tensor $R_{\alpha \beta \rho \sigma}$ of (Ricci flat) Kerr spacetime is equal to its Weyl tensor. Weyl curvature tensor has two pairs of antisymmetric indices. The algebraic structure of the Weyl tensor
$C_{\mu \nu}{^{\lambda \kappa}}$ allows it to be treated as an endomorphism in the space of two-forms at each point $p\in M$:
\[ C:\operatornamewithlimits{\bigwedge}^{2} \mathrm{T}^{*}_p{\mathrm{M}} \rightarrow \operatornamewithlimits{\bigwedge}^{2} \mathrm{T}^{*}_p{\mathrm{M}} \]
In the six-dimensional space $\displaystyle\operatornamewithlimits{\bigwedge}^{2} \mathrm{T}^{*}_p{\mathrm{M}}$ we can distinguish a two-dimensional subspace $\mathbb{V}$ which is spanned by $Y$ and $\ast Y$. $\mathbb{V}$
proves %turns out
to be an invariant subspace of the endomorphism $C$. More precisely,
\begin{eqnarray}
Y^{\lambda \kappa} C_{\mu \nu \lambda \kappa } \mathrm{d} x^{\mu} \wedge \mathrm{d} x^{\nu}&=&\frac{4 m}{\rho^4} \left\{[r^{2}-a^{2} \cos^{2} \theta] \sin \theta \mathrm{d} \theta \wedge [(r^{2}+a^{2}) \mathrm{d} \varphi- a \mathrm{d} t] \right. \nonumber \\
& & +\left. 2 a r \cos \theta \mathrm{d} r \wedge [a \sin^{2} \theta \mathrm{d} \theta - \mathrm{d} t] \right\} \nonumber \\
\ast Y^{\lambda \kappa} C_{\mu \nu \lambda \kappa }  \mathrm{d} x^{\mu} \wedge \mathrm{d} x^{\nu}&=&\frac{4 m}{\rho^{4}} \left\{2 a r \sin \theta \cos \theta \mathrm{d} \theta \wedge [a \mathrm{d} t- (r^{2}+a^{2}) \mathrm{d} \varphi] \right. \nonumber \\
& & +\left. [r^{2}-a^{2} \cos^{2} \theta] \mathrm{d} r \wedge (a \sin^{2} \theta \mathrm{d} \varphi -\mathrm{d} t) \right\} \label{YC_contraction}
\end{eqnarray}
%where $|\rho|^{2}=r^{2}+a^{2} \cos^{2} \theta$.
We can test %check
by direct computation whether %that
the above result is a linear combination of $Y$ and $\ast Y$, and if the endomorphism $C$ restricted to $\mathbb{V}$ reduces to the simple matrix form:
% Proper calculus can confirm the fact presented below.
\begin{equation}\label{HQ}
C \left[\begin{array}{c}
Y \\
\ast  Y
\end{array}\right]= \frac{4 m}{(r^2+a^2 \cos^{2} \theta)^3 }
%\underbrace{\left[ \begin{array}{cc}r (r^2-3 a^2 \cos^2 \theta ) & (3 r^2 - a^2 \cos^2 \theta ) a \cos \theta \\
%-(3 r^2 - a^2 \cos^2 \theta ) a \cos \theta & r (r^2-3 a^2 \cos^2 \theta )	\end{array} \right]}_{H}
\, {\bf H} \left[ \begin{array}{c}
Y \\
\ast Y
\end{array} \right]
\end{equation}
where
$\displaystyle {\bf H}:= %\underbrace{
\left[ \begin{array}{cc}
	r (r^2-3 a^2 \cos^2 \theta ) & (3 r^2 - a^2 \cos^2 \theta ) a \cos \theta \\
	-(3 r^2 - a^2 \cos^2 \theta ) a \cos \theta & r (r^2-3 a^2 \cos^2 \theta )
	\end{array} \right] %}_{H}
$.

%Funkcje własne macierzy
Eigenfunctions of the real matrix $\textbf{H}$ are complex:
\begin{equation}
\lambda=(r \pm \imath a \cos \theta )^3
\end{equation}
and the corresponding eigenvectors being:
%Odpowiadające wektory własne:
\begin{equation}
\left\{\left[\begin{array}{c}
-\imath \\
1
\end{array} \right]\begin{array}{c}
\\
,
\end{array}  \left[\begin{array}{c}
\imath \\
1
\end{array} \right] \right\}
\end{equation}
\paragraph{Fact:} The two-form $Y - \imath \ast Y$ diagonalizes the Weyl endomorphism $C$. More precisely, we have
\begin{eqnarray}
C_{\mu \nu}{^{\lambda \kappa}} Y_{\lambda \kappa} - \imath C_{\mu \nu}{^{\lambda \kappa }} (\ast Y_{\lambda \kappa}) =2 V \left( Y_{\mu \nu} - \imath \ast\! Y_{\mu \nu} \right) \label{diag_Y}
\end{eqnarray}
where the eigenfunction is
\begin{equation}
V=\frac{2 m}{(r- \imath a \cos \theta)^3 }
\end{equation}
The above fact has serious consequences in the further theory formulation. In the spacetime where (\ref{diag_Y}) holds, a scalar electromagnetic wave equation can be introduced. This will be presented in detail in the next paragraph.
\subsection{Scalar electromagnetic wave equation in Kerr spacetime \label{section_FI_Kerr}}
Now we return to the equation (\ref{FQ_Weyl}) and rewrite it for $Y_{\mu\nu}$ and its dual $\ast Y_{\mu\nu}$ multiplied by $\imath$:
\begin{equation}
\left\{ \begin{array}{rcl} \left( \Box-\frac{1}{6}R \right)(F_{\mu \nu}Y^{\mu \nu})+\frac{1}{2}F^{\sigma \lambda}C_{\sigma \lambda \mu \nu}Y^{\mu \nu}&=&0 \\
\left( \Box -\frac{1}{6}R \right)( \imath F_{\mu \nu}  (\ast Y^{\mu \nu}))+\frac{\imath}{2}F^{\sigma \lambda}C_{\sigma \lambda \mu \nu}(\ast Y^{\mu \nu})&=&0
\end{array} \right. \label{uklad}
\end{equation}
For Kerr metric, as a solution of vacuum Einstein equations, the curvature scalar $R$ vanishes. %$R=0$.
Subtracting both sides and using (\ref{diag_Y}), we obtain:
\begin{equation}
 \Box\left[F_{\mu \nu} \left( Y^{\mu \nu}- \imath(\ast Y^{\mu \nu}) \right)\right]+V F^{\mu \nu} \left(Y_{\mu \nu}- \imath(\ast Y_{\mu \nu})  \right)=0
\end{equation}
Introducing \begin{equation}\label{defPhi} %\color{red}
%$\displaystyle
\Phi:=\frac{\imath}{2}F^{\mu \nu} \left[Y_{\mu \nu}- \imath(\ast Y_{\mu \nu})\right] \, ,
\end{equation}
we get a scalar electromagnetic wave equation:
\begin{equation}
\Box \Phi + V \Phi=0 \nonumber
\end{equation}
The above calculations prove the following %theorem:
\begin{tw}[Fackerell--Ipser]
	Dynamics of a Maxwell field in the Kerr spacetime can be reduced to the scalar wave equation:
	\begin{equation}
	\Box \Phi + V \Phi=0 \label{FI}
	\end{equation}
	where $\displaystyle \Phi=\frac{\imath}{2} F^{\mu \nu} \left[Y_{\mu \nu}- \imath(\ast Y_{\mu \nu})\right]$,
   $\displaystyle V=\frac{2 m}{(r- \imath a \cos \theta)^3} $.
\end{tw}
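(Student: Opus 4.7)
The plan is to invoke the general identity (\ref{FQ_Weyl}) twice, once for the CYK tensor $Y$ and once for its Hodge dual $\ast Y$ (which is itself a CYK tensor by Theorem~\ref{dualCYK}), and then combine the two resulting scalar equations in a single complex combination so that the Weyl contraction on the right becomes an eigenvalue action. Concretely, I would first apply (\ref{FQ_Weyl}) to $Q=Y$ to get
\begin{equation*}
\Bigl(\Box-\tfrac{1}{6}R\Bigr)(F_{\mu\nu}Y^{\mu\nu})+\tfrac{1}{2}F^{\sigma\lambda}C_{\sigma\lambda\mu\nu}Y^{\mu\nu}=0,
\end{equation*}
and then the analogous equation with $Y$ replaced by $\ast Y$, multiplied by $\imath$, as in the system (\ref{uklad}).

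Next I would use that the Kerr metric is Ricci-flat (so $R=0$ and also the $R_{\sigma\mu}$ term in (\ref{FQ_Riemann}) is absent, though this was already folded into (\ref{FQ_Weyl})), so both equations reduce to
\begin{equation*}
\Box(F_{\mu\nu}Y^{\mu\nu})+\tfrac{1}{2}F^{\sigma\lambda}C_{\sigma\lambda\mu\nu}Y^{\mu\nu}=0,
\end{equation*}
and likewise for $\ast Y$. Subtracting $\imath$ times the second from the first, the left-hand side becomes $\Box\bigl[F_{\mu\nu}(Y^{\mu\nu}-\imath\,\ast\! Y^{\mu\nu})\bigr]$ by linearity of the wave operator, while the right-hand Weyl term, by the diagonalization identity (\ref{diag_Y}), collapses to
\begin{equation*}
\tfrac{1}{2}F^{\sigma\lambda}C_{\sigma\lambda\mu\nu}\bigl(Y^{\mu\nu}-\imath\,\ast\! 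Y^{\mu\nu}\bigr)=V\,F^{\mu\nu}\bigl(Y_{\mu\nu}-\imath\,\ast\! Y_{\mu\nu}\bigr),
\end{equation*}
with eigenfunction $V=2m/(r-\imath a\cos\theta)^3$. Multiplying the whole equation by $\imath/2$ and recognising the scalar $\Phi$ defined in (\ref{defPhi}) gives exactly $\Box\Phi+V\Phi=0$.

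Since the main analytic ingredients, namely the wave equation (\ref{FQ_Weyl}) for the scalar $F_{\mu\nu}Q^{\mu\nu}$, the duality theorem, and the diagonalization property (\ref{diag_Y}), are already established, no new hard step remains. The only point deserving care is verifying that the reduction to a single scalar is not lossy, i.e.\ that $\Phi$ genuinely encodes the full Maxwell dynamics rather than just one complex combination of Newman--Penrose components; this is a conceptual rather than computational obstacle, and it is naturally addressed by noting that $Y-\imath\ast Y$ is (pointwise) a non-degenerate complex two-form, so a real solution $F$ of Maxwell's equations is determined up to the usual dualities by its pairing with $Y-\imath\ast Y$. With that caveat noted, the argument above provides the full proof of the Fackerell--Ipser reduction.
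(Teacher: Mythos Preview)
Your derivation of the wave equation $\Box\Phi+V\Phi=0$ is correct and follows exactly the paper's own argument: instantiate (\ref{FQ_Weyl}) for $Y$ and $\ast Y$, drop $R$ by Ricci-flatness, take the complex combination, and apply the diagonalization (\ref{diag_Y}).

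However, your final paragraph goes beyond what the theorem (and the paper's proof) actually claim, and the justification you offer there is not sound. Pointwise non-degeneracy of the complex two-form $Y-\imath\ast Y$ does \emph{not} imply that a real Maxwell field $F$ is determined by the single complex scalar $\Phi=(\imath/2)F^{\mu\nu}(Y_{\mu\nu}-\imath\ast Y_{\mu\nu})$: a two-form has six real components while $\Phi$ carries only two, so algebraic non-degeneracy alone cannot close this gap. In fact the paper exhibits in Section~\ref{Trautman_solutions} explicit (singular) Maxwell solutions $\mathcal{F}$, $\mathcal{H}$ for which $\Phi\equiv 0$, so the map $F\mapsto\Phi$ has a nontrivial kernel on solutions; the paper only \emph{conjectures} that the kernel is trivial for regular data and treats the reconstruction question separately (Section~4.4). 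For the theorem as stated you should simply omit that paragraph: the content of the Fackerell--Ipser result here is the derivation of the scalar equation, not an injectivity statement.
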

\subsection{Other approaches to Fackerell--Ipser equation}
The F--I equation was derived at the beginning of 1970s \cite{FackIps} using Newman--Penrose formalism.  The formulation of Maxwell equations on Kerr background in terms of Newman--Penrose formalism and Teukolsky functions are discussed by Chandrasekhar \cite{Chandra}. In the Newman--Penrose formalism electromagnetic tensor $F$ is characterized by three scalars:
\begin{equation}
\phi_{+1}=F_{\mu \nu}l^{\mu} m^{\nu} \quad \phi_{0}=\frac{1}{2}F_{\mu \nu} (l^{\mu} n^{\nu}+\bar{m}^{\mu}m^{\nu}) \quad \phi_{-1}=F_{\mu \nu} \bar{m}^{\mu} n^{\nu} \label{NP}
\end{equation}
 %and
where null tetrad $l^{\mu}, n^{\nu}, m^{\rho},\bar{m}^{\sigma}$ is defined at each point of spacetime; the vectors $l^{\mu}$ and $n^{\nu}$ are real while $m^{\rho}$ and $\bar{m}^{\sigma}$ are complex conjugates of one another; moreover $m_{\rho}\bar{m}^{\rho}=-1$,  $l_{\mu} n^{\mu}=1$. All other scalar products vanish. The normalization of the null tetrad is invariant under the action of the six-dimensional group of Lorentz transformations. For further calculations, we will use Carter tetrad:
\begin{eqnarray}
l&=&\frac{1}{\sqrt{2 \rho^2}}\left[\sqrt{\Delta} \mathrm{d} t-\frac{\rho^2}{\sqrt{\Delta}} \mathrm{d} r - a \sqrt{\Delta} \sin^2 \theta \mathrm{d} \varphi \right] \label{Carter_tetrad}\\
n&=&\frac{1}{\sqrt{2 \rho^2}}\left[\sqrt{\Delta} \mathrm{d} t+\frac{\rho^2}{\sqrt{\Delta}} \mathrm{d} r - a \sqrt{\Delta} \sin^2 \theta \mathrm{d} \varphi \right] \nonumber \\
m&=&\frac{1}{\sqrt{2 \rho^2}} \left[ \imath a \sin \theta \mathrm{d} t - \rho^2 \mathrm{d} \theta - \imath (r^2+a^2) \sin \theta \mathrm{d} \varphi \right] \nonumber \\
\bar{m}&=&\frac{1}{\sqrt{2 \rho^2}} \left[-\imath a \sin \theta \mathrm{d} t - \rho^2 \mathrm{d} \theta + \imath (r^2+a^2) \sin \theta \mathrm{d} \varphi \right ] \nonumber
\end{eqnarray}
The solution $\Phi$ of F--I equation (\ref{FI}) is related to Newman--Penrose electromagnetic scalar $\phi_{0}$ from (\ref{NP}) by the following formula
\begin{equation}
\Phi=(r-\imath a \cos \theta) \phi_{0} \label{FI_NP}
\end{equation}
%\todo[ color= green!20!yellow!50!white]{New }
{%\color{red} 
The above equation %Eq. (\ref{FI_NP})
is an algebraic relation between $\Phi$ and $\phi_{0}$. Eq. (\ref{FI}) with the help of Eq. (\ref{FI_NP}) gives an explicit second-order equation for $\phi_{0}$. It was originally obtained by Fackerell and Ipser in \cite{FackIps}.}\\
CYK tensors for Kerr spacetime can be easily expressed in terms of Carter null tetrad. It %causes a simple form of
simplifies equation (\ref{FI_NP}).  CYK tensors (\ref{cyk_kerr}) in Carter tetrad (\ref{Carter_tetrad}) have the form
\begin{eqnarray}
Y&=& 2 a \cos \theta n \wedge l- 2 \imath r \bar{m} \wedge m \\
\ast Y&=&  2 r n \wedge l+ 2 \imath a \cos \theta \bar{m} \wedge m
\end{eqnarray}
In particular, $ \imath (Y-\imath \ast Y)=2(r-\imath a \cos \theta)(n \wedge l +\bar{m} \wedge m)$. For further details see chapter 2.5.1 in \cite{Aksteiner_phd}.
The F--I equation is investigated with the use of spinorial approach. For the recent results see \cite{And_spinor}.
\section{Miscellaneous results \label{misc_result}}
\subsection{Generalization for de Sitter spacetimes}
Equation (\ref{FQ_Weyl}) remains true for four-dimensional spacetimes equipped with CYK tensor. In particular, it is valid for some metrics which are solutions to Einstein equations with cosmological constant $\Lambda$. Only the existence of CYK tensor is needed for our construction. We present an example of generalized F--I equation for Kerr--de Sitter spacetime.
\paragraph{Fackerell--Ipser equation in Kerr--de Sitter spacetime.} The Kerr solution (\ref{kerr.metric}) can be further generalized to include a non-zero cosmological constant $\Lambda$. In the following section we will not distinguish the sign of $\Lambda$ and refer to Kerr--de Sitter
and Kerr--anti--de Sitter spacetimes %shortly by the
as KdS spacetime.  Locally, in Boyer--Lindquist coordinates (see \cite{GibHaw}), the metric takes the form
\begin{eqnarray}
	\label{kds.metric}
	\nonumber
	g_{\mu\nu}\mathrm{d}x^\mu \mathrm{d}x^\nu &=& \rho^2\left(\frac{1}{\Delta_r} \mathrm{d}r^2+\frac{1}{\Delta_\theta} \mathrm{d}\theta^2\right)
	+\frac{\sin^2\theta\Delta_\theta}{\rho^{2}\Xi^{2}} \left(a \mathrm{d}t-(r^2+a^2)\mathrm{d}\varphi\right)^2  \\
	&\quad& -  \frac{\Delta_r}{\rho^{2}\Xi^{2}}\left(\mathrm{d}t-a \sin^2 \theta \, \mathrm{d}\varphi\right)^2 %\;,
\end{eqnarray}
where
\begin{eqnarray}
	\label{rho.2}
	\rho^2 &=& r^2+a^2\cos^2 \theta \vphantom{\frac11} %\;,
\\
	\label{delta.r}
	\Delta_r  &=& (r^2+a^2)\left(1-\frac{\Lambda}3 r^2\right) -2 m r %\;,
\\
	%&=& -\frac{\Lambda }{3} r^4 + \left(1-\frac{a^2 \Lambda }{3}\right)r^2 -2 m r +a^2 \\
	\label{delta.theta}
	\Delta_\theta &=& 1+\frac{a^2 \Lambda}3  \cos^2 \theta %\;,
\\
	\label{xi}
	\Xi &=& 1+\frac{a^2\Lambda}3 %\;,
\end{eqnarray}
with $t \in \mathbb{R}$, $r \in \mathbb{R}$, and $\theta$, $\varphi$ being the standard coordinates parameterizing the sphere.
We will keep away from zeros of $\rho$ and $\Delta_r$, and ignore the coordinate singularities $\sin \theta=0$. In the context of our construction, a natural question arises: what are the solutions of Eq. (\ref{df_Q}) for the KdS metric? It turns out that the solution (\ref{cyk_kerr}) for Kerr spacetime  generalizes in a simple way: $Y_{\mu \nu}$ and $\ast Y_{\mu \nu}$ (\ref{cyk_kerr}) are also the solutions of Eq. (\ref{df_Q}) for KdS metric (see \cite{Kubiz})
\begin{equation}
\left\{\begin{array}{r}
Y=Q_{\mathrm{Kerr}}=Q_{\mathrm{KdS}}\\
\ast Y=\ast Q_{\mathrm{Kerr}}=\ast Q_{\mathrm{KdS}}
\end{array}  \right.
\end{equation}
and $\ast Y$ is also a dual companion (in Hodge sense (\ref{Hodge})) for KdS metric.
%\todo[color= green!20!yellow!50!white]{New}
 We would like to stress that the CYK two-forms for KdS spacetime do not depend on cosmological constant $\Lambda$.
 Surprisingly, it turns out that the CYK two-forms (\ref{cyk_kerr}) are solutions of CYK Eq. (\ref{df_Q}) for Kerr and for KdS spacetime.\\
 The Weyl tensor of KdS spacetime depends on $\Lambda$ in a non-trivial way. But one can check that CYK--Weyl contractions $Y^{\lambda \kappa} C_{\mu \nu \lambda \kappa }$ and $ \ast Y^{\lambda \kappa} C_{\mu \nu \lambda \kappa }$ do not depend on $\Lambda$ and the relation (\ref{YC_contraction}) holds also in KdS spacetime. We analyzed
 the problem with help of the
 symbolic software WATERLOO MAPLE to check the result (\ref{YC_contraction}) for KdS metric. The reasoning in subsections \ref{section_CYK_Kerr} and \ref{section_FI_Kerr} are based on equations (\ref{YC_contraction}) and (\ref{FQ_Weyl}). Moreover, the Weyl diagonalization process in subsection \ref{section_CYK_Kerr} is purely algebraical. It can be repeated for KdS spacetime in the same way. In subsection \ref{section_FI_Kerr} the equalities (\ref{uklad}) also hold for KdS spacetime. The difference between Kerr and KdS case is that the curvature scalar does not vanish. The curvature scalar for KdS metric is equal to $4 \Lambda$. We can rewrite Eq. (\ref{uklad}) in the following form:
 \begin{equation}
\left\{ \begin{array}{rcl} \left( \Box-\frac{2}{3} \Lambda \right)(F_{\mu \nu}Y^{\mu \nu})+\frac{1}{2}F^{\sigma \lambda}C_{\sigma \lambda \mu \nu}Y^{\mu \nu}&=&0 \\
\left( \Box -\frac{2}{3} \Lambda \right)( \imath F_{\mu \nu}  (\ast Y^{\mu \nu}))+\frac{\imath}{2}F^{\sigma \lambda}C_{\sigma \lambda \mu \nu}(\ast Y^{\mu \nu})&=&0
\end{array} \right. \label{uklad.KdS}
 \end{equation}
The Eq. (\ref{diag_Y}) holds true for Weyl tensor in KdS spacetime. Subtracting equations (\ref{uklad.KdS}) from each other and combining with (\ref{diag_Y}), we obtain
\begin{equation}
 \Box\left[F_{\mu \nu} \left( Y^{\mu \nu}- \imath(\ast Y^{\mu \nu}) \right)\right]+
 \left(V-\frac{2}{3} \Lambda\right) F^{\mu \nu} \left(Y_{\mu \nu}- \imath(\ast Y_{\mu \nu})  \right)=0
\end{equation}
Denoting by $\displaystyle \Phi=\frac{\imath}{2}F^{\mu \nu} \left[Y_{\mu \nu}- \imath(\ast Y_{\mu \nu})\right]$ and introducing $V_{\Lambda}=V-\frac{2}{3} \Lambda$, we prove the following %theorem
\begin{tw}[Generalized Fackerell--Ipser equation]
	Dynamics of a Maxwell field in the Kerr--de Sitter spacetime can be reduced to the scalar wave equation
	\begin{equation}
	\Box \Phi + V_{\Lambda} \Phi=0 \label{FI.KdS}
	\end{equation}
	where $\displaystyle \Phi=\frac{\imath}{2} F^{\mu \nu} \left[Y_{\mu \nu}- \imath(\ast Y_{\mu \nu})\right]$,  $\displaystyle V_{\Lambda}=\frac{2 m}{(r- \imath a \cos \theta)^3}-\frac2{3} \Lambda $.
	\label{GFI}
\end{tw}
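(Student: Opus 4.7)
The plan is to follow the same derivation that produced the Fackerell--Ipser equation (Theorem 3.1) in Section \ref{section_FI_Kerr}, now applied to the KdS metric while keeping careful track of the cosmological constant. Three ingredients are needed: (i) $Y$ and $\ast Y$ remain CYK tensors for the KdS metric; (ii) the Weyl--CYK diagonalization identity (\ref{diag_Y}) continues to hold; (iii) the curvature scalar is $R=4\Lambda$ rather than zero. Step (iii) is immediate from (\ref{kds.metric}), while step (i) is the content of the result cited from Kubiz: one verifies (or invokes) that the explicit two-forms in (\ref{cyk_kerr}) satisfy the CYK equation (\ref{df_Q}) for the metric (\ref{kds.metric}), and that the Hodge pairing is preserved. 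With these in hand, the master identity (\ref{FQ_Weyl}) from Theorem \ref{General_thm} applies verbatim to both $Y$ and $\ast Y$ in the KdS geometry.

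Step (ii) is the conceptual crux. Even though the Weyl tensor of KdS depends nontrivially on $\Lambda$, the claim is that the $\Lambda$-dependent pieces cancel upon contraction with the CYK two-forms, so that the formulas (\ref{YC_contraction}) remain true with the same right-hand side as in the Ricci-flat case. This I would verify by a direct symbolic computation (the authors note that WATERLOO MAPLE is a convenient tool). Once (\ref{YC_contraction}) is secured, the diagonalization
\begin{equation}
C_{\mu\nu}{}^{\lambda\kappa}\bigl(Y_{\lambda\kappa}-\imath\ast Y_{\lambda\kappa}\bigr)=2V\bigl(Y_{\mu\nu}-\imath\ast Y_{\mu\nu}\bigr), \qquad V=\frac{2m}{(r-\imath a\cos\theta)^3},
\end{equation}
follows by the purely algebraic argument already given in Section \ref{section_CYK_Kerr} (eigenvectors of the fixed matrix $\mathbf{H}$), and in particular the eigenfunction $V$ is unchanged.

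Finally I would assemble the pieces. Substituting $R=4\Lambda$ into (\ref{FQ_Weyl}) applied to $Y$ and to $\imath\ast Y$ produces the two equations (\ref{uklad.KdS}). Subtracting them and using the diagonalization yields
\begin{equation}
\Box\!\left[F_{\mu\nu}\bigl(Y^{\mu\nu}-\imath\ast Y^{\mu\nu}\bigr)\right]+\left(V-\tfrac{2}{3}\Lambda\right)F^{\mu\nu}\bigl(Y_{\mu\nu}-\imath\ast Y_{\mu\nu}\bigr)=0,
\end{equation}
which is exactly (\ref{FI.KdS}) after setting $\Phi=\tfrac{\imath}{2}F^{\mu\nu}[Y_{\mu\nu}-\imath\ast Y_{\mu\nu}]$ and $V_{\Lambda}=V-\tfrac{2}{3}\Lambda$.

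The main obstacle is step (ii): the $\Lambda$-independence of the Weyl--CYK contraction is not a priori obvious, because the KdS Weyl tensor genuinely differs from the Kerr Weyl tensor. Everything else is a mechanical adaptation of the Kerr proof, but this algebraic miracle is what makes the reduction work and deserves being flagged (and, pragmatically, checked by computer algebra) as the non-trivial ingredient.
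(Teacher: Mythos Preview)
Your proposal is correct and follows essentially the same route as the paper: invoke that $Y,\ast Y$ are CYK for KdS, verify (by computer algebra) that the Weyl--CYK contractions (\ref{YC_contraction}) are $\Lambda$-independent so that the algebraic diagonalization (\ref{diag_Y}) carries over unchanged, and then run the argument of Section \ref{section_FI_Kerr} with $R=4\Lambda$ inserted into (\ref{FQ_Weyl}) to obtain (\ref{uklad.KdS}) and hence (\ref{FI.KdS}). You have also correctly identified the only non-mechanical step as the $\Lambda$-independence of the Weyl--CYK contraction, exactly as the paper does.
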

%We can generalize the F--I equation to KdS spacetime in easy way. There are differences between Eqs (\ref{FI}) and (\ref{FI.KdS}). The d'Alembert operator contains $\Lambda$ in non-simply way. Also the solutions of Eq. (\ref{FI.KdS}) are connected non--trivially with electric $E^{i}$ and magnetic $B^{i}$ field.
%\todo[inline, color=green!60!yellow]{Aby napisać dobrze końcówkę trzeba by poszukać jak ludzie opisują elektromagnetyzm dl KdS. Uogólnienie NP formalizmu?}
\subsection{Distorted Coulomb solution for Kerr--de Sitter spacetime \label{distorted_Culomb}}
Distorted Coulomb solution can be easily found by studying Kerr--Newman--de Sitter solution describing a rotating black hole with the electric charge (see \cite{GibHaw}). The Kerr--de Sitter metric (\ref{kds.metric}) is a special case of the Kerr--Newman--de Sitter metric for vanishing charge. The electromagnetic potential $A$ related to electromagnetism in Kerr--Newman--de Sitter spacetime is given by
\begin{equation}
A_{\mathrm{KdS}}=-\frac{q}{\Xi}\frac{r}{\rho^2}\left[\mathrm{d} t - a \sin^2 \theta \mathrm{d} \varphi \right] \label{A_KNdS}
\end{equation}
The associated Maxwell two-form ($F=\mathrm{d} A$) is given by
\begin{equation}
F_{\mathrm{KdS}}=\frac{q ( \rho^2-2 r^2)}{\rho^4 \Xi}\left[\mathrm{d} t - a \sin^2 \theta \mathrm{d} \varphi \right] \wedge \mathrm{d} r+\frac{q r a \sin 2 \theta }{\rho^4 \Xi}\left[a \mathrm{d} t - (a^2+r^2) \mathrm{d} \varphi\right] \wedge \mathrm{d} \theta \label{F_KNdS}
\end{equation}
and its dual companion (\ref{Hodge}) has the form
\begin{equation}
\ast F_{\mathrm{KdS}}= \frac{2 q r a \cos \theta }{\rho^4 \Xi} \left[\mathrm{d} t - a \sin^2 \theta \mathrm{d} \varphi \right] \wedge \mathrm{d} r - \frac{q ( \rho^2-2 r^2) \sin \theta }{\rho^4 \Xi} \left[a \mathrm{d} t - (a^2+r^2) \mathrm{d} \varphi\right] \wedge \mathrm{d} \theta
\end{equation}
The electric charge $e$ can be obtained from the Gauss law:
\begin{equation}
e=\frac{1}{4 \pi} \int_{\mathcal{S}^{2}} \ast F_{\mathrm{KdS}}=\frac{q}{\Xi} \label{charge}
\end{equation}
where $\mathcal{S}^2$ is a closed two-surface surrounding the horizon. We want to keep non vanishing $e$ (\ref{charge}) but simultaneously put parameter $q=0$ in the spacetime metric. This way we get the Maxwell solution (\ref{F_KNdS}) in Kerr--de Sitter spacetime.  The scalar $\Phi_{\mathrm{KdS}}=\frac{\imath}{2} F_{\mathrm{KdS}}^{\mu \nu} \left[Y_{\mu \nu}- \imath(\ast Y_{\mu \nu})\right]$  described by theorem  \ref{GFI} is given by
\begin{equation}
\Phi_{\mathrm{KdS}}=\frac{ - q \Xi}{r- \imath a \cos \theta} \label{Phi_KNdS}
\end{equation}

We want to point out a few facts:
\begin{enumerate}
	\item
	\begin{comment}
	 \todo[inline, color= red!12!white]{The multipole expansion of the scalar field $\Phi$ has even a richer structure --- for $m= 0$, $\Lambda= 0$,  the CYK tensors $Y$ and $\ast Y$ (\ref{cyk_kerr}) are not ``round'' in Minkowski spacetime.}
	 \end{comment}
%		\todo[ color=green!20!yellow!50!white]{ New}
		The CYK tensors for Kerr spacetime have encoded non trivial combination of symmetries. We would like to discuss it on the simplest example -- in the limit of Minkowski spacetime. CYK tensors in Minkowski spacetime form a twenty-dimensional space of solutions. It is a maximal dimension of the space of solutions. The solutions which span this twenty-dimensional space can be chosen in such a way that each of them is related to a particular symmetry (translation, rotation, etc.), see \cite{CYK_Minkowski} for further details. CYK tensors (\ref{cyk_kerr}) with $m=0$ are the non-trivial combinations of the simple symmetrical components. That means that the real and imaginary part of $\Phi$ cannot be treated like a projection of Maxwell field $F_{\mu \nu}$ on some well-known pure type of symmetries.
	\item	%\todo[ color= green!20!yellow!50!white]{New}
		Even in the simple case when we have a monopole solution of Maxwell field ${}^{S} \! F$ in Eq. (\ref{F_KdS_decomp})  (${}^{S} \! F$ also fulfils Maxwell equations) the scalar field $\Phi$ has a higher multipole expansion. It comes from rich multipole structure of $Y$ and $\ast Y$ from Eqs (\ref{cyk_kerr}).
	\item 	%\todo[ color= green!20!yellow!50!white]{New}
	The section \ref{distorted_Culomb} has also shown that the electric field of Kerr--Newman solution is not a Coulomb solution in the standard meaning. It contains, in addition to monopole part, also  higher order multipoles. Hence, we call the section ``Distorted Coulomb solution''.
	\item The solution (\ref{A_KNdS})  contains also a magnetic field, which is proportional to electric charge $e$ and rotation parameter $a$.  Even in the limit  $m = 0$, $\Lambda = 0$, corresponding to the Minkowski spacetime, the Maxwell field (\ref{F_KNdS}) is not spherically symmetric.
	
\end{enumerate}

 Let us consider the regime $m \to 0$, $\Lambda \to 0$. The Kerr--de Sitter metric (\ref{kds.metric}) reduces to the following form:
 \begin{equation}
\eta=-\mathrm{d} t^2+\frac{\rho^2}{r^2+a^2} \mathrm{d} r^2 + \rho^2 \mathrm{d} \theta^{2}+ (r^2+a^2) \sin^{2} \theta \mathrm{d} \varphi^2 \label{simpl_kds}
 \end{equation}
 corresponding to the Minkowski spacetime. Moreover, the spherical coordinates $(R(r,\theta),\Theta(r,\theta), \varphi)$ defined by the following formulae:
 \begin{eqnarray}
R&=&\sqrt{r^2+a^2\sin^2\theta} \nonumber \\
\sin\Theta&=&{\sin\theta \sqrt{1+\frac{a^2}{r^2}}\over
	\sqrt{1+\frac{a^2}{r^2}\sin^2\theta}} \label{RT_coords}\\
\cos\Theta &=&{\cos\theta \over
	\sqrt{1+\frac{a^2}{r^2}\sin^2\theta}} \nonumber
 \end{eqnarray}
transform the metric tensor (\ref{simpl_kds}) to the standard spherical form $\eta= -\mathrm{d} t^2 + \mathrm{d} R^2+ R^2 \mathrm{d} \Theta^2+R^2\sin^{2} \Theta \mathrm{d} \varphi^2$.

 The vector potential (\ref{A_KNdS}) (for $m = 0$, $\Lambda = 0$) in coordinates $(t,R,\Theta,\varphi)$ can be divided into spherically symmetric Coulomb term ${}^{S} \! A$  and higher rank multipole rest ${}^{R} \! A$:
 \begin{equation}
 A_{\mathrm{KdS}}={}^{S} \! A+{}^{R} \! A
 \end{equation}
where
 \begin{equation}
 {}^{S} \! A=-\frac{q}{R} \mathrm{d} t
 \end{equation}
The corresponding Maxwell tensor decomposition is
\begin{equation}
F_{\mathrm{KdS}}={}^{S} \! F+{}^{R} \! F \label{F_KdS_decomp}
\end{equation}
and the summands ${}^{S} \! F$ and ${}^{R} \! F $ separately fulfil Maxwell equations. A charge of ${}^{S} \! F$  obtained from the Gauss law is equal to $q=e$. The field ${}^{R} \! F$ has vanishing charge. The first terms of ${}^{R} \! A$ are related to magnetic and electric dipoles:
\begin{equation}
{}^{R} \! A=  \frac{q a \sin^2 \Theta}{R} \mathrm{d} \varphi-\frac{q a^2 (1-3 \cos^2 \Theta)}{2 R^3} \mathrm{d} t + \mathrm{l. o. t.}
\end{equation}

The multipole expansion of the scalar field $\Phi$ has a rich structure. The CYK tensors $Y$ and $\ast Y$ (\ref{cyk_kerr}) (for $m = 0$, $\Lambda = 0$) in spherical coordinates (\ref{RT_coords})  take the form
\begin{eqnarray}
Y & = & \nonumber
r\sin\theta \mathrm{d}\theta \wedge \left[ \left( r^2+a^2\right)\mathrm{d}\varphi - a\mathrm{d} t
\right]+ a\cos\theta \mathrm{d} r\wedge\left(\mathrm{d} t - a\sin^2\theta \mathrm{d}\varphi \right)
\\ \label{KerrYflat}
&= & \underbrace{R^3\sin\Theta\mathrm{d}\Theta\wedge\mathrm{d}\varphi}_{Y^{S}} +
\underbrace{a\mathrm{d}(R\cos\Theta)\wedge\mathrm{d} t}_{Y^{C}=a \mathrm{d} z \wedge \mathrm{d} t} \, .
\end{eqnarray}
\begin{eqnarray} \nonumber
\ast Y &=& a\cos\theta\sin\theta \mathrm{d}\theta \wedge \left[ \left(r^2+a^2
\right) \mathrm{d}\varphi - a\mathrm{d} t \right]+
r\mathrm{d} r \wedge \left(a\sin^2\theta \mathrm{d}\varphi -  \mathrm{d} t \right) \\ \label{Yflat}
&=& \underbrace{\mathrm{d} t\wedge R \mathrm{d} R}_{\ast Y^{S}} +
\underbrace{\frac12 a \mathrm{d}(R^2\sin^2\Theta) \wedge \mathrm{d} \varphi}_{\ast Y^{C}=a\mathrm{d} x \wedge \mathrm{d} y}
\end{eqnarray}
The first pair $(Y^{S}, \ast Y^{S})$ is the spherically symmetric CYK tensor with its dual companion and the second
pair $(Y^{C}, \ast Y^{C})$ is a constant correction which in Cartesian coordinates $(t,x,y,z)$ takes the simple form $Y^{C}=a \mathrm{d} z \wedge \mathrm{d} t$ and $\ast Y^{C}=a\mathrm{d} x \wedge \mathrm{d} y$. Hence, in the limit $m \to 0$, $\Lambda \to 0$ for the Kerr--de Sitter spacetime two-forms $Y^{S}, \ast Y^{S},Y^{C}, \ast Y^{C}$ are the CYK tensors for Minkowski spacetime. Using the $F_{\mathrm{KdS}}$ decomposition (\ref{F_KdS_decomp}) and $Y$, $\ast Y$ decompositions  (\ref{KerrYflat})--(\ref{Yflat}), the solution $\Phi_{\mathrm{KdS}}$ (\ref{FI.KdS}) can be divided into several parts. Introducing $(F|Y):=\frac1{2} F^{\mu \nu} Y_{\mu \nu}$, we have
\begin{eqnarray}
\Phi_{\mathrm{KdS}}&=&\imath \left( {}^{S} \!F+{}^{R} \! F\left|Y^{S}+Y^{C} +\frac{1}{\imath} [\ast Y^{S}+\ast Y^{C}] \right.\right) \nonumber \\
&=&\underbrace{\left({}^{S} \!F \left| \ast Y^{S} \right.\right)}_{\Phi^{S}}-\frac{1}{\imath}\underbrace{\left( {}^{S} \!F \left| Y^{S} \right.\right)}_{0}- %\nonumber \\	& &
\underbrace{\frac{1}{\imath} \left( {}^{S} \! F \left| Y^{C} - \imath  \ast Y^{C}\right. \right)- \imath\left( {}^{R} \! F \left| Y- \imath \ast Y \right.\right)}_{\Phi^{R}}
\end{eqnarray}
where
\begin{eqnarray*}
\Phi^{S}&=&\frac{q}{R}\\
\Phi^{R}&=&\frac{\imath q a \cos \Theta}{R^2}+\frac{q a^2 (1 - 3 \cos^2 \Theta)}{2 R^3} + O\left(\frac{1}{R^4}\right)
\end{eqnarray*}
 $\Phi^{S}$ and $\Phi^{R}$ both satisfy Eq. (\ref{FI.KdS}) in Minkowskian limit: $\Box \Phi=0$. The multipole structure of $\Phi^{R}$ has different interpretation than multipole expansion of Maxwell tensor\footnote{The multipole expansion of $\Phi^{R}$ is not in one to one correspondence to the electromagnetic multipoles. For example, the first term of $\Phi^{R}$ is given by $- \imath \left( {}^{S} \! F \left| Y^{C}\right. \right)-\imath \left( M \left| Y^{S}\right. \right)=(-1+2)\frac{\imath q a \cos \Theta}{R^2}$, where $M:=\mathrm{d} \left(\frac{q a \sin^2 \Theta}{R} \mathrm{d} \varphi\right)$ is a magnetic dipole part of ${}^{R} \! F$.} ${}^{R} \! F$. Similar considerations are made by Lynden-Bell (see \cite{Magic_EM}).
\subsection{Special singular solutions of Maxwell equations \label{Trautman_solutions}}
In this section we present a singular family of complex Maxwell fields on Kerr background. The solution $\Phi$ of F--I equation (\ref{FI}) constructed from any representant of this family is equal to zero.

All the information about an electromagnetic field can be encoded in a single, complex two-form
\begin{equation}
\mathcal{F}=F+\imath \ast F
\end{equation}
Let us recall that the Hodge star operator in (\ref{Hodge}) for Kerr metric satisfies $\ast^2=- \mathrm{id}$. A two-form $p$  is self dual in Hodge sense if
	\begin{equation}
	\ast p= \imath p \label{self_dual}
	\end{equation}
Note that $\mathcal{F}$ is self-dual. Maxwell equations in terms of $\mathcal{F}$ take a simple form
\begin{equation}
\mathrm{d} \mathcal{F}=0 \label{self_dual_Maxwell}
\end{equation}
 We will call  a two-form $p$ algebraically special if it fulfils
 \begin{equation}
 p \wedge p=0 \label{alg_special}
 \end{equation}
 Robinson and Trautman have proposed a singular, self-dual and algebraically special Maxwell field for optical geometry metrics (see \cite{trautman}). With the help of Paweł Nurowski \cite{Nurowski}, we have constructed such Maxwell field on Kerr. We will denote it by $\mathcal{F}$. $\mathcal{F}$ is built of two principal null covectors (\ref{Carter_tetrad}) and it has the following form:
\begin{equation}
\mathcal{F}=\frac{2 f(u- \imath a \cos \theta,\psi-\imath \mathrm{atanh}(\cos \theta))}{\imath \sqrt{\Delta} \sin \theta} n \wedge \bar{m} \label{solution_F}
\end{equation}
Equivalently in terms of coordinate forms
\begin{eqnarray}
\mathcal{F} &=& f(u- \imath a \cos \theta,\psi-\imath \mathrm{atanh}(\cos \theta)) \left[ \left(\mathrm{d} t + \frac{r^2+a^2}{\Delta} \mathrm{d} r \right)\wedge \left(\mathrm{d} \varphi+\frac{a}{\Delta}\mathrm{d} r \right) \right. \nonumber \\ & & \left. +\frac{\imath}{\sin \theta} \left(\mathrm{d} t+\frac{\rho^2}{\Delta} \mathrm{d} r -a \sin^2 \theta  \mathrm{d} \varphi \right) \wedge \mathrm{d} \theta \right]
\end{eqnarray}
where $\mathrm{d} u=\mathrm{d} t+\frac{r^2+a^2}{\Delta} \mathrm{d} r$, $\mathrm{d} \psi=\mathrm{d} \varphi+ \frac{a}{\Delta} \mathrm{d} r$ and $f(\cdot, \cdot)$ is an arbitrary, differentiable function of two variables. Only one of Newman--Penrose electromagnetic scalars (\ref{NP}) constructed from $\mathcal{F}$ is non-zero
\begin{equation}
 \phi_{-1}(\mathcal{F})=\frac{2 f(u- \imath a \cos \theta,\psi-\imath \mathrm{atanh}(\cos \theta))}{ \imath \sqrt{\Delta} \sin \theta} \qquad \phi_{0}(\mathcal{F})=\phi_{1}(\mathcal{F})=0 \label{scalar_F}
\end{equation}
Starting from Maxwell equations for Newman--Penrose electromagnetic scalars, it is easy to show that $\phi_{-1}(\mathcal{F})$ given by Eq. (\ref{scalar_F}) fulfils the following equation:
\begin{equation}\label{D-1}
\left(\mathcal{D}^{\dagger}-\frac{r-m}{\sqrt{2 \rho^2 \Delta}}\right)\phi_{-1}(\mathcal{F})=0
\end{equation}
where
\begin{equation}
\mathcal{D}^{\dagger}=\frac{r^2+a^2}{\sqrt{2 \rho^2 \Delta}} \partial_{t}-\sqrt{\frac{\Delta}{2 \rho^2}} \partial_{r}+\frac{a}{\sqrt{2 \rho^2} \Delta} \partial_{\varphi}
\end{equation}
We have also found another family of solutions which satisfies conditions (\ref{self_dual})--(\ref{alg_special}). We denote it by $\mathcal{H}$
\begin{equation}
\mathcal{H}=\frac{2 \imath h(v+\imath a \cos \theta,\phi +\imath \mathrm{atanh}(\cos \theta))}{\sqrt{\Delta} \sin \theta} l \wedge m
\end{equation}
In terms of coordinate forms
\begin{eqnarray}
\mathcal{H}&=&h(v+\imath a \cos \theta,\phi +\imath \mathrm{atanh}(\cos \theta))\left[ \left(\mathrm{d} t - \frac{r^2+a^2}{\Delta} \mathrm{d} r \right)\wedge \left(\mathrm{d} \varphi-\frac{a}{\Delta}\mathrm{d} r \right) \right. \nonumber \\ & & \left. -\frac{\imath}{\sin \theta} \left(\mathrm{d} t+\frac{\rho^2}{\Delta} \mathrm{d} r -a \sin^2 \theta  \mathrm{d} \varphi \right) \wedge \mathrm{d} \theta \right]
\end{eqnarray}
where $\mathrm{d}v=\mathrm{d}t- \frac{r^2+a^2}{\Delta} \mathrm{d}r$, $\mathrm{d} \phi=\mathrm{d} \varphi- \frac{a}{\Delta}\mathrm{d} r$ and $h(\cdot, \cdot)$ is an arbitrary, differentiable function of two variables.
%,  for solution $\mathcal{H}$,
Again, only one Newman--Penrose electromagnetic scalar constructed from $\mathcal{H}$ remains non-vanishing
%analogically to (\ref{scalar_F}):
\begin{equation}
\phi_{-1}(\mathcal{H})=\phi_{0}(\mathcal{H})=0 \qquad \phi_{1}(\mathcal{H})=\frac{h(v+ \imath a \cos \theta,\phi+\imath \mathrm{atanh}(\cos \theta))}{\sqrt{\Delta} \sin \theta} \label{scalar_H}
\end{equation}
note that it is a different component than that for solution %(\ref{solution_F}).
 (\ref{scalar_F}).
Maxwell equations in terms of Newman--Penrose electromagnetic scalars lead to the following equation
\begin{equation}\label{D1}
\left(\mathcal{D}+\frac{r-m}{\sqrt{2 \rho^2 \Delta}}\right)\phi_{1}(\mathcal{H})=0
\end{equation}
where
\begin{equation}
\mathcal{D}=\frac{r^2+a^2}{\sqrt{2 \rho^2 \Delta}} \partial_{t}+\sqrt{\frac{\Delta}{2 \rho^2}} \partial_{r}+\frac{a}{\sqrt{2 \rho^2} \Delta} \partial_{\varphi}
\end{equation}
{%\color{red} 
The formulae (\ref{scalar_F}) and (\ref{scalar_H}) describe explicit examples of Maxwell field, where scalar $\Phi$  (\ref{FI}) vanishes in the whole spacetime.
It means that there exist singular solutions which belong to the kernel of the mapping $F \mapsto \Phi$ (\ref{defPhi}) restricted to the Maxwell solutions.
%of the corresponding operators in (\ref{D-1}) and (\ref{D1}).
However, we are convinced that for regular solutions this kernel becomes trivial.}

\subsection{Reconstruction of the Maxwell field from F--I initial data set $(\Phi, \partial_{t} \Phi)$}
In this section we will discuss reconstruction of Maxwell field from given solution of F--I equation (\ref{FI}). It is convenient to use tensor density of electromagnetic field instead of Maxwell tensor $F$.

%It is convenient to use three-dimensional and two-dimensional objects.
 Let us introduce the following convention: By small latin letters $(k,l,m...)$ we will denote a three-dimensional space index which corresponds to $(r,\theta, \varphi)$ coordinates. Capital latin letters $(A,B,C...)$ are two-dimensional angular indices which run $(\theta, \varphi)$ subset.

We define respectively electric field density $\mathcal{E}^{k}$, magnetic field density $\mathcal{B}^{k}$ and complex electromagnetic vector field density $\mathcal{Z}^{k}$ in the following way:
\begin{eqnarray}
\mathcal{E}^{k}&:=&\sqrt{-\det g_{\mu \nu}} F^{0 k} \nonumber \\
\mathcal{B}^{k}&:=&-\sqrt{-\det g_{\mu \nu}}  \ast F^{0 k} \label{def_EBZ} \\
\mathcal{Z}^{k}&:=&\mathcal{E}^{k}+\imath \mathcal{B}^{k} \nonumber
\end{eqnarray}
 Let us denote by $\Psi$ a scalar density associated with scalar $\Phi$ (\ref{FI}) by the following formula:
\begin{equation}
\Psi:=\frac{\Sigma^{2}}{(r- \imath a \cos  \theta) \rho^{2}}\frac{1}{a^2+r^2}\sqrt{-\det g_{\mu \nu}} \Phi \label{Psi}
\end{equation}
where $\displaystyle \Sigma:= \sqrt{(a^2+r^2)^2-a^2 \Delta \sin^2 \theta} $,  cf. appendix \ref{Quantity}. $\Psi$ defined by (\ref{Psi}) in the terms of $\mathcal{Z}^{k}$ (\ref{def_EBZ}) takes the form
\begin{equation}
\Psi=\mathcal{Z}^{r}+\imath P_{A}\mathcal{Z}^{A} \label{Psi_Z}
\end{equation}
where
\begin{equation}
P_{\theta}:=\frac{ a \Delta \sin \theta}{r^2+a^2} \qquad \label{Qtheta}
P_{\varphi}:=0
\end{equation}
  We assume that there is given a smooth solution (\ref{Psi}) outside of the exterior horizon $(r>r_{+})$. Maxwell equations (\ref{eqMaxwell}) in the terms of $\mathcal{Z}^{k}$ (\ref{def_EBZ}) can be written as
\begin{eqnarray}
	\partial_{k} \mathcal{Z}^{k}&=&0 \label{dif_Z}\\
	\partial_{t} \mathcal{Z}^{k}&=& \partial_{l}(N^{l}\mathcal{Z}^{k}-N^{k}\mathcal{Z}^{l})-\imath \partial_{l}\left(\frac{\tilde{g}^{km}\tilde{g}^{ln}}{N}\varepsilon_{mnp}\mathcal{Z}^{p}\right) \label{Maxwel_Z}
\end{eqnarray}
where $N=\frac{1}{\sqrt{-g^{00}}}$ is a lapse function (\ref{lapse}) and $N^{k}=N^2 g^{0k}$ is a shift vector (\ref{shift}). It is a time+space decomposition. $\tilde{g}^{km}$ denotes an inverse of three-dimensional space metric. $\varepsilon_{mnp}$ is a three-dimensional  Levi-Civita tensor\footnote{Defined by $\varepsilon_{mnp}=\sqrt{\det g_{kl}} \epsilon_{mnp}$ with convention $\epsilon_{r \theta \varphi}=1$, cf.  (\ref{g3}).}.
%for explicite form of $\sqrt{\det g_{kl}}$.
Let us notice that (\ref{Psi_Z}) enables one to replace radial component $\mathcal{Z}^{r}$ by $Z^{A}$. Moreover, differentiating Eq. (\ref{Psi_Z}) with respect to $t$ and using Eq. (\ref{Maxwel_Z}), we obtain
\begin{equation}
\imath \left(\partial_{t}-N^{\varphi} \partial_{\varphi} \right) \Psi= \partial_{A}\left(\frac{\Sigma}{\rho} \varepsilon^{AB} \mathcal{Z}_{B}\right)-\imath P_{B}\partial_{A}\left(\frac{\Sigma}{\rho} \varepsilon^{AB} \mathcal{Z}_{r}\right)+\imath P_{A}\partial_{r}\left(\frac{\Sigma}{\rho} \varepsilon^{AB} \mathcal{Z}_{B}\right)
\end{equation}
 $\varepsilon^{AB}$ is a two-dimensional Levi-Civita tensor\footnote{Defined by $\varepsilon_{AB}=\sqrt{\det g_{CD}} \epsilon_{AB}$ with convention $\epsilon_{ \theta \varphi}=1$, cf. (\ref{g2}).}.
%for explicite form of $\sqrt{\det g_{AB}}$.
  Substituting $\mathcal{Z}^{r}$ with using Eq. (\ref{Psi_Z}), we have
 \begin{equation}
 \imath \left(\partial_{t}-K^{\varphi} \partial_{\varphi} \right) \Psi= \partial_{A}\left(\frac{\Sigma}{\rho} \varepsilon^{AB} \mathcal{Z}_{B}\right)- P_{B}\partial_{A}\left(\frac{{\Sigma \rho}}{\Delta} \varepsilon^{AB} P_{C} \mathcal{Z}^{C}\right)+\imath P_{A}\partial_{r}\left(\frac{\Sigma}{\rho} \varepsilon^{AB} \mathcal{Z}_{B}\right) \label{rot_Z}
 \end{equation}
 where
 \begin{eqnarray}
 K^{\varphi}&=&N^{\varphi}+ \frac{\Sigma}{\rho} g_{rr} P_{\theta} \varepsilon^{\theta\varphi}
  = \frac{a \Delta \rho^{2}}{(r^2+a^2) \Sigma^{2}} \nonumber
 \end{eqnarray}
Differentiating Eq. (\ref{Psi_Z}) with respect to $r$ and using Eq. (\ref{dif_Z}) gives the following:
\begin{equation}
\partial_{r} \Psi=\imath \partial_{r}(P_{C} \mathcal{Z}^{C})-\partial_{A}\mathcal{Z}^{A}
\label{div_Z}
\end{equation}
Equations (\ref{rot_Z}) and (\ref{div_Z}) form a system of differential equations for $\mathcal{Z}^{A}$.
It is a system of first order linear PDE's. For analytic data one can check the local existence of solutions using the Cauchy-Kovalevskaya theorem.
Moreover, the exterior domain $V=[r_0,\infty]\times S^2$ enables one to change this system into the following infinite-dimensional system of ODE's:
\[ \frac{\mathrm{d} u}{\mathrm{d} r}=A(r)u+f\]
where $f,u \in l^2$, $l^2$ is a Hilbert space corresponding to spherical harmonics on $S^2$ and $A(r)$ is one-dimensional family of linear operators in $l^2$. The properties of $A(r)$ determine the system. The vector $f$ corresponds to given data $(\partial_{r} \Psi , \left(\partial_{t}-K^{\varphi} \partial_{\varphi} \right) \Psi )$ and $u$ corresponds to the solution $\mathcal{Z}^{A}$.

Spherical part $\mathcal{Z}^A$ plus Eq. (\ref{Psi_Z}) enables one to reconstruct the full Maxwell field $\mathcal{Z}^{k}$.  The details will be analyzed in a separate paper. 
\vspace{0.3 cm}

{\noindent \sc Acknowledgements} We are grateful to Paweł Nurowski for discussions and helpful remarks.
This work was supported in part by Narodowe Centrum Nauki (Poland) under Grant No.
DEC-2011/03/B/ST1/02625.
%Supported in part by Narodowe Centrum Nauki under the grant DEC-2011/03/B/ST1/02625.

%\addcontentsline{toc}{section}{Appendix}
\appendix
\section{Proof of equation (\ref{dalamb_Q}) \label{shifted_proof}}
Changing the names of indices, we write (\ref{comT}) three times
\begin{eqnarray}
Q_{\lambda \kappa; \nu \mu}-Q_{\lambda \kappa; \mu \nu}&=&Q_{\sigma \kappa} {R^{\sigma}}_{\lambda \nu \mu}+Q_{\lambda \sigma} {R^{\sigma}}_{\kappa \nu \mu} \nonumber \\
Q_{\mu \kappa; \lambda \nu}-Q_{\mu \kappa; \nu \lambda}&=&Q_{\sigma \kappa} {R^{\sigma}}_{\mu \lambda \nu}+Q_{\mu \sigma} {R^{\sigma}}_{\kappa \lambda \nu} \nonumber \\
Q_{\nu \kappa; \mu \lambda}-Q_{\nu \kappa; \lambda \mu}&=&Q_{\sigma \kappa} {R^{\sigma}}_{\nu \mu \lambda}+Q_{\nu \sigma} {R^{\sigma}}_{\kappa \mu \lambda} \nonumber
\end{eqnarray}
We take the first equation, subtract the second one and finally add the third equation. Assuming that  $Q_{\mu \nu}$ is antisymmetric
\begin{eqnarray}
&& \hspace*{-1cm} Q_{\lambda \kappa; \nu \mu}-Q_{\lambda \kappa; \mu \nu} -Q_{\mu \kappa; \lambda \nu}+Q_{\mu \kappa; \nu \lambda}+Q_{\nu \kappa; \mu \lambda}-Q_{\nu \kappa; \lambda \mu} \nonumber \\
\phantom{X}&=&2 Q_{\lambda \kappa; \nu \mu}-(Q_{\lambda \kappa; \mu}+Q_{\mu \kappa; \lambda})_{;\nu}+(Q_{\mu \kappa; \nu}+Q_{\nu \kappa; \mu})_{; \lambda} -(Q_{\nu \kappa; \lambda}+Q_{\lambda \kappa; \nu})_{; \mu} \nonumber \\
&=&Q_{\sigma \lambda} {R^{\sigma}}_{\kappa \mu \nu}+Q_{\sigma \mu} {R^{\sigma}}_{\kappa \lambda \nu}+Q_{\sigma \nu} {R^{\sigma}}_{\kappa \lambda \mu}+ 2 Q_{\sigma \kappa} {R^{\sigma}}_{\mu \nu \lambda} \label{Q_brackets}
\end{eqnarray}
We denote the covector $\xi_{\mu}=\nabla^{\sigma}Q_{\sigma \mu}$. It fulfils
\begin{equation}
2{\xi^{\mu}}_{;\mu}={Q^{\sigma \rho}}_{;\sigma \rho}-{Q^{\sigma \rho}}_{;\rho \sigma}=-2 Q^{\kappa \nu} R_{\kappa \nu}=0 \label{div_xi}
\end{equation}
In the above equality we use Eq. (\ref{comT}). Definition of the CYK two-form (\ref{df_Q}) applied to the terms in brackets in the Eq. (\ref{Q_brackets}) implies
\begin{eqnarray}
2Q_{\lambda \kappa; \nu \mu}&=&\frac{2}{3} (g_{\lambda \mu} \xi_{\kappa;\nu} +g_{\nu \lambda} \xi_{\kappa; \mu}-g_{\mu \nu} \xi_{\kappa; \lambda}-g_{\kappa(\lambda} \xi_{\mu);\nu}+g_{\kappa(\mu} \xi_{\nu);\lambda} -g_{\kappa(\nu} \xi_{\lambda);\mu}) \nonumber \\
&&+Q_{\sigma \lambda} {R^{\sigma}}_{\kappa \mu \nu}+Q_{\sigma \mu} {R^{\sigma}}_{\kappa \lambda \nu}+Q_{\sigma \nu} {R^{\sigma}}_{\kappa \lambda \mu}+ 2 Q_{\sigma \kappa} {R^{\sigma}}_{\mu \nu \lambda} \nonumber \\
&&+\underbrace{\mathcal{Q}_{\lambda \kappa \mu;\nu}}_{0}-\underbrace{\mathcal{Q}_{\mu \kappa \nu;\lambda}}_{0}+\underbrace{\mathcal{Q}_{\nu \kappa \lambda; \mu}}_{0} \label{war_cal}
\end{eqnarray}
Contracting (\ref{war_cal}) with respect to indices $\mu$ and $\nu$ and using the algebraic properties of $\mathcal{Q}$, we get
\begin{equation}
{{Q_{\lambda\kappa}}^{;\mu}}_{\mu}+{R^{\sigma}}_{\kappa \lambda \mu} {Q^{\mu}}_{\sigma}+Q_{\sigma \kappa} {R^{\sigma}}_{\lambda}+\frac{2}{3} \xi_{(\kappa;\lambda)}+\frac{1}{3} g_{\kappa \lambda} {\xi^{\mu}}_{;\mu}= \underbrace{{\mathcal{Q}_{\mu \kappa \lambda}}^{;\mu}}_{0} \label{war_cal_con}
\end{equation}
Using (\ref{comT}) and (\ref{df_Q}) leads to
\begin{equation}
 \xi_{(\mu;\lambda)}=\frac{3}{2}R_{\sigma (\mu} {Q_{\lambda)}}^{\sigma} \label{sym_xi}
\end{equation}
Combining equations (\ref{war_cal_con}), (\ref{sym_xi}) and (\ref{div_xi}), we obtain
\begin{equation}
\Box Q_{\lambda \kappa}={R^{\sigma}}_{\kappa \lambda \nu} {Q_{\sigma}}^{\nu}-R_{\sigma[\kappa}{Q_{\lambda]}}^{\sigma}
\end{equation}
where $\Box T_{\rho \nu}=\nabla_{\mu} \nabla^{\mu} T_{\rho \nu}$.
\section{Useful quantities \label{Quantity}}
\paragraph{Kerr metric} Non-trivial components of the inverse metric are
\begin{eqnarray}
g^{tt}&=&- \frac{\Sigma^2}{\rho^{2} \Delta} \label{gtt}\\
g^{t \varphi}&=&-\frac{2 m r a}{\rho^{2} \Delta} \label{gtf}
\end{eqnarray}
where
\begin{equation*}
\Sigma^{2}=(a^2+r^2)^2-a^2 \Delta \sin^2 \theta
\end{equation*}
The three-metric has the following diagonal components:
\[ g_{rr}= \frac{\rho^2}{\Delta} \quad g_{\theta\theta}= \rho^2 \quad g_{\varphi\varphi}= \frac{\Sigma^2}{\rho^2}\sin^2\theta \]
Lapse function $N$ and the non-vanishing component of the shift vector $N^{k}$ in $3+1$ decomposition of the Kerr metric (\ref{kerr.metric}) are
\begin{eqnarray}
N&=&\sqrt{\frac{\rho^{2} \Delta}{\Sigma^2}} \label{lapse}\\
N^{\varphi}&=&-\frac{2 m r a}{\Sigma^2} \label{shift}
\end{eqnarray}
Square roots of determinants of Kerr metric for the corresponding dimensions are
\begin{equation}
\sqrt{-\det g_{\mu\nu}}= \rho^2 \sin \theta
\end{equation}
%Determinant of three-dimensional space metric
\begin{equation}
	\sqrt{\det g_{kl}}=\sqrt{\frac{\rho^2}{\Delta}} \Sigma \sin \theta \label{g3}
\end{equation}
\begin{equation}
	\sqrt{\det g_{AB}}=\Sigma \sin \theta \label{g2}
\end{equation}
 In Kerr spacetime (\ref{kerr.metric}) Maxwell field tensor density is related to electric $\mathcal{E}^i$ and magnetic $\mathcal{B}^{i}$ field densities in the following way
\begin{eqnarray}
\tilde{F}&=& \sqrt{-\det g_{\mu \nu}} F \nonumber \\
 &=& \mathcal{E}^i \partial_{t} \wedge  \partial_{i}+\mathcal{B}^{\phi} \frac{\Delta \sin \theta}{\rho^2} \partial_{r} \wedge \partial_{\theta} \nonumber
 -\left(\mathcal{B}^{\theta} \frac{\Delta \rho}{\Sigma^2 \sin \theta }+\mathcal{E}^{r} \frac{2 a m r }{\Sigma^2}  \right) \partial_{r} \wedge \partial_{\phi}  \\
& & +\left(\mathcal{B}^{r} \frac{\Delta \rho}{\Sigma^2 \sin \theta }-\mathcal{E}^{\theta} \frac{2 a m r }{\Sigma^2}  \right) \partial_{\theta} \wedge \partial_{\phi}
\end{eqnarray}
%\paragraph{Kerr--de Sitter}
%\begin{eqnarray}
%\sqrt{-g}&=& \frac{\rho^2 \sin \theta}{\left(1+\frac{\Lambda a^2}{3} \right)^2}
%\end{eqnarray}

%\input{trash.tex}

\end{document}